\pdfoutput=1
\documentclass[11pt,journal,onecolumn]{IEEEtran}

\usepackage[T1]{fontenc}
\usepackage[utf8]{inputenc}
\usepackage{amsmath,amsthm,amssymb,mathtools}
\usepackage{cite}
\usepackage{microtype}
\usepackage{booktabs,array}
\usepackage{needspace}
\usepackage{graphicx}
\usepackage{tikz}
\usepackage{pgfplots}
\usepackage{algorithm}
\usepackage{algpseudocode}
\usepackage{url}
\usepackage{xcolor}
\usepackage{hyperref}
\pgfplotsset{compat=1.18}

\hypersetup{
  pdftitle={Pair-Partition Constructions for CPM-Based Quantum LDPC Codes},
  pdfauthor={Koki Okada and Kenta Kasai},
  unicode=true,
  bookmarksnumbered=true,
  colorlinks=true,
  linkcolor=black,
  citecolor=black,
  urlcolor=black
}
\newcommand{\F}{\mathbb{F}}
\newcommand{\Z}{\mathbb{Z}}
\newcommand{\rank}{\operatorname{rank}}
\newcommand{\rowspan}{\operatorname{rowspan}}
\newcommand{\wt}{\operatorname{wt}}
\newcommand{\qcode}[3]{[\![#1,#2,#3]\!]}
\newcommand{\swp}[2]{(#1\,#2)}
\theoremstyle{definition}
\newtheorem{theorem}{Theorem}
\newtheorem{proposition}{Proposition}
\newtheorem{lemma}{Lemma}
\newtheorem{corollary}{Corollary}
\newtheorem{example}{Example}
\newtheorem{definition}{Definition}
\theoremstyle{remark}
\newtheorem{remark}{Remark}

\title{Pair-Partition Constructions for CPM-Based Quantum LDPC Codes}
\author{Koki~Okada and Kenta~Kasai%
\thanks{The authors are with the Institute of Science Tokyo, Tokyo, Japan
(e-mail: okada.k.3154@m.isct.ac.jp; kenta@ict.eng.isct.ac.jp).}}

\begin{document}
\maketitle

\begin{abstract}
We introduce the pair-partition (PP) construction of binary
Calderbank--Shor--Steane quantum low-density parity-check codes from
circulant permutation matrices.  A square array of pair partitions
imposes linear paired-difference equations on the CPM exponents and thereby
guarantees CSS orthogonality.  Pairing graphs derived from this array allow
the combinatorial design to be screened before exponent search.  We further
give and prove a complete algorithm for verifying lower bounds on
the quantum minimum distance of a fixed CSS lift.  The algorithm searches
for zero-syndrome vectors outside the opposing stabilizer row space, uses
only rigorously valid pruning rules, and finds no vector through a prescribed
weight if and only if the corresponding distance exceeds that weight.  For
fixed column weight, row weight, and search limit, cyclic symmetry makes the
combinatorial support-enumeration bound independent of the lift size,
although matrix preprocessing and row-space tests can still depend on
the lift size.  Thus the construction stage and the distance-verification stage are
both specified by directly checkable finite procedures.  Under checkable
rank and size conditions, a permanent construction also guarantees a logical
operator and a factorial distance upper bound determined only by the column
weight.
\end{abstract}

\section{Introduction}

Classical low-density parity-check (LDPC) codes were introduced as sparse
parity-check codes \cite{gallager1962}, and Tanner graphs became the standard
way to describe their local constraints and short cycles \cite{tanner1981}.
The Calderbank--Shor--Steane (CSS) construction turns two binary codes
satisfying an orthogonality condition into a quantum stabilizer code
\cite{calderbank1996,steane1996}.  Quantum LDPC codes combine these ideas by
requiring sparse stabilizer checks for syndrome extraction and sparse-graph
decoding \cite{mackay2004,breuckmann2021}.

Quasi-cyclic LDPC matrices built from circulant permutation matrices (CPMs)
describe a large sparse matrix by a small exponent array.  Their short cycles
can be checked by congruences on that array \cite{fossorier2004}.  We use the
more specific term CPM-LDPC for the cyclic part of the construction.  In the
CSS setting, two exponent arrays must also be coupled so that
\(H_XH_Z^{\mathsf T}=0\).  Algebraic quantum CPM-LDPC constructions were
developed in \cite{hagiwara2007}; related finite-length CPM and
affine-permutation constructions have pursued higher girth, finite rate, and
controlled distance \cite{komoto2025explicit,komoto2025bound,
kasai2025AffinePermutation,kasai2026OrthogonalityBarrier,
okadaKasai2026AffineCoset,okadaKasai2026SquareBase,
okadaKasai2026TwoBranch,okadaKasai2026RateTwoThirds}.

Product-type constructions establish positive-rate quantum LDPC families
with growing or almost-linear distance \cite{tillich2014,panteleev2022}.
Here we study the complementary finite-length problem of constructing fixed
CPM-LDPC matrices and verifying their structural properties and exact
distances with reproducible records.

We call the following procedure the \emph{pair-partition (PP) construction}.
Assign a pair partition of block-column indices to every pair of \(X\)- and
\(Z\)-block rows, equate the mixed CPM differences at the two endpoints of
each pair, and form the two CPM block matrices from exponent arrays satisfying
those linear equations.  The paired terms cancel over \(\mathbb F_2\), so
CSS orthogonality becomes a homogeneous linear condition.  The exponent
arrays are then screened for short cycles and low-weight logical operators.

The construction begins with a combinatorial design rule for the
pair-partition array \(M\), before any exponent is chosen.  Row and column
unions of the matchings in
\(M\) define pairing graphs.  A graph cycle of length \(r\) forces a Tanner
cycle of length at most \(2r\).  Thus a girth-eight search may immediately
discard any \(M\) whose pairing graphs have parallel edges or triangles; in
particular, girth eight requires \(L\ge2J\).  The complete protograph also
gives the universal Tanner-girth upper bound twelve.

After fixing an \(M\) that passes the pairing-graph design tests, we solve the paired-difference system, remove
exponent assignments that create short cycles or reproduce stored low-weight
logical patterns, and check the surviving lifted matrices directly.  The
stored patterns accelerate candidate generation.  Reported distance lower
bounds come from complete searches on the fixed matrices, and matching upper
bounds come from explicit non-stabilizer zero-syndrome vectors.  Cyclic
symmetry reduces the possible roots from \(LP\) to
\(L\); for fixed \(J,L,W\), the resulting support-search tree has a bound
independent of \(P\).  This bound covers combinatorial enumeration; total
runtime also includes construction of the length-\(LP\) matrices, rank
preprocessing, and stabilizer row-space tests whose cost may grow with \(P\).
Under the full-rank and size conditions satisfied by the reported codes, a
permanent construction additionally guarantees logical operators of weight
at most \((J+1)!\), giving a lift-size-independent upper bound on both CSS
distances.

The cancellation condition for CSS orthogonality does not require cyclic
translations: pairing columns with equal relative affine maps extends the
construction to affine permutation matrices (APMs).  The resulting APM-PP
presentation gives a directly checkable CSS certificate.  A structured
specialization identifies the APM-PP check matrices with the selected check
matrices, called active check matrices in the Kasai construction, while a
separate condition records the complementary parent rows.  The larger design space may also support
searches for codes suited to the GALA framework and AOD-based neutral-atom
implementations, with compatibility evaluated from each completed code's
rank, distance, short cycles, and implementation schedule.

The resulting list contains thirty-six exact-distance regular CSS codes.
Twenty-nine have girth six, while \(\qcode{472}{122}{16}\), \(\qcode{584}{150}{18}\),
\(\qcode{1112}{282}{20}\), \(\qcode{1336}{338}{22}\),
\(\qcode{1630}{656}{20}\), \(\qcode{1784}{450}{24}\), and
\(\qcode{2230}{896}{24}\) have girth eight.  Exact distances reach \(24\),
and rates reach \(0.574\), attained by the high-rate
\(\qcode{1414}{812}{12}\) code.
The \(\qcode{492}{170}{20}\) and \(\qcode{516}{178}{20}\) codes are compact
column-weight-four examples, and \(\qcode{708}{242}{22}\) extends this
sequence to distance 22.
Every parameter claim is tied to a fixed matrix and a reproducible record.

Section~2 defines the CPM-based PP construction and its cycle conditions.
Section~3 develops the pair-partition design rule.  Section~4 presents the
complete candidate-search procedure, including distance-guided forbidden
patterns.  Section~5 proves the distance lower-bound algorithm and describes
upper-bound witnesses.  Section~6 gives the numerical results.  Section~7
then identifies the part of the PP argument that extends to APMs, explains
the Kasai specialization, and gives a brief GALA/AOD design outlook.
Section~8 concludes the paper.

\section{CPM-Based CSS Construction}

\subsection{CPM block matrices}

All vectors are column vectors, and \(\rowspan(H)\) denotes the column-vector span of the transposed rows of \(H\).  Fix positive integers \(J\) and \(L\) with \(J<L\) and \(L\) even, and fix a prime lift size \(P\).  The rows of each parity-check matrix are grouped into \(J\) circulant block rows, and the \(L\) block columns are indexed by \(\Z_L\).  For \(s\in\F_P\), let \(C(s)\) denote the \(P\times P\) CPM whose row lift index \(a\) is incident to column lift index \(a-s\).  Define the exponent arrays \(E:=(e_{j\ell})\in\F_P^{J\times L}\) and \(D:=(d_{j\ell})\in\F_P^{J\times L}\), and set
\begin{equation}
H_X=\bigl(C(e_{j\ell})\bigr)_{j,\ell},\qquad
H_Z=\bigl(C(d_{j\ell})\bigr)_{j,\ell}.
\label{eq:blockmatrices}
\end{equation}
The matrices \(H_X\) and \(H_Z\) are \(JP\times LP\) binary block matrices built from CPM blocks, while \(e_{j\ell}\) and \(d_{j\ell}\) are the scalar exponents used in the algebraic constraints below.
Equivalently, both sides are CPM lifts of the complete all-one \(J\times L\) protograph.  Hence each matrix has \(JP\) rows, \(LP\) columns, row weight \(L\), and column weight \(J\).  Once CSS orthogonality holds, the number of encoded qubits is \(k=LP-\rank H_X-\rank H_Z\).
Ignoring rank dependencies gives the design-rate benchmark \(1-2J/L\); the
actual rate \(k/(LP)\) is computed from the two binary ranks.

\subsection{Pair partitions and mixed differences}

The PP construction uses a \(J\times J\) array \((M_{ij})_{0\le i,j<J}\) of pair partitions.  The array \((M_{ij})\) is chosen before the exponent labels.  Each \(M_{ij}\) is a pair partition of \(\Z_L\), namely a partition of the \(L\) block-column indices into \(L/2\) unordered pairs; hence \(L\) is even.  Any such array gives CSS orthogonality when the paired-difference equations below are satisfied.  Section~\ref{sec:pair-partition-design} then adds the conditions on \(M\) used to avoid forced Tanner cycles.

\begin{example}[\((3,8)\)-regular illustrative assignment]
\label{ex:pair-partition-array}
For a \((3,8)\)-regular illustrative assignment with \(P=53\), one
pair-partition array is shown in
Fig.~\ref{fig:pair-partition-array}.  For example,
\(\swp{0}{3}\swp{1}{2}\swp{4}{6}\swp{5}{7}\) denotes the four unordered
pairs \((0,3),(1,2),(4,6),(5,7)\).  Each entry is a pair partition of
\(\Z_8\).
\end{example}

\begin{figure}[H]
\centering
\footnotesize
\setlength{\arraycolsep}{3pt}
\[
(M_{ij})=
\left(
\begin{array}{c|c|c}
\swp{0}{3}\swp{1}{2}\swp{4}{6}\swp{5}{7} &
\swp{0}{2}\swp{1}{6}\swp{3}{7}\swp{4}{5} &
\swp{0}{6}\swp{1}{5}\swp{2}{3}\swp{4}{7} \\ \hline
\swp{0}{5}\swp{1}{3}\swp{2}{4}\swp{6}{7} &
\swp{0}{3}\swp{1}{5}\swp{2}{7}\swp{4}{6} &
\swp{0}{7}\swp{1}{4}\swp{2}{6}\swp{3}{5} \\ \hline
\swp{0}{1}\swp{2}{6}\swp{3}{5}\swp{4}{7} &
\swp{0}{5}\swp{1}{7}\swp{2}{4}\swp{3}{6} &
\swp{0}{2}\swp{1}{6}\swp{3}{7}\swp{4}{5}
\end{array}
\right).
\]
\caption{A \(3\times3\) pair-partition array on
\(\mathbb Z_8\) used in Example~\ref{ex:pair-partition-array}.}
\label{fig:pair-partition-array}
\end{figure}

For every pair \((u,v)\in M_{ij}\), impose the following linear equation over \(\F_P\):
\begin{equation}
 d_{j,u}-e_{i,u}=d_{j,v}-e_{i,v}.
 \label{eq:matchingconstraint}
\end{equation}
These equations force the mixed differences for the two endpoints of every pair to be equal.  To write them as one linear system, order the
\(2JL\) exponent variables as
\[
 x=\bigl(\operatorname{vec}(E),\operatorname{vec}(D)\bigr)^{\mathsf T}
 \in\F_P^{2JL}.
\]
Let \(A_M\in\F_P^{(J^2L/2)\times 2JL}\) have one row for every
\(\{u,v\}\in M_{ij}\), with coefficients \(-1,+1,+1,-1\) in the
columns of \(e_{i,u},e_{i,v},d_{j,u},d_{j,v}\), respectively, and zeros
elsewhere.  Then all paired-difference equations are exactly
\begin{equation}
 A_Mx=0.
 \label{eq:AM-system}
\end{equation}

\begin{example}[The full coefficient matrix \(A_M\)]
For the array in Fig.~\ref{fig:pair-partition-array}, order the equation rows
first by \((i,j)\) and then by the displayed pairs in \(M_{ij}\).  With the
column order \((\operatorname{vec}(E),\operatorname{vec}(D))\), the resulting
\(36\times48\) matrix is shown in Fig.~\ref{fig:AM-grid}.  Every row has
exactly four nonzero entries.  For \(\{u,v\}\in M_{ij}\), its colours mark
\(-1,+1,+1,-1\) in the columns of
\(e_{i,u},e_{i,v},d_{j,u},d_{j,v}\), respectively.  Row reduction of this
full matrix over \(\F_{53}\) produces the exponent solution space used in
the candidate search.
\end{example}

\begin{figure}[H]
\centering
\begin{tikzpicture}[x=0.27cm,y=-0.27cm,font=\scriptsize]
\fill[white] (0,0) rectangle (48,36);
\foreach[count=\r from 0] \i/\j/\u/\v in {
0/0/0/3,0/0/1/2,0/0/4/6,0/0/5/7,
0/1/0/2,0/1/1/6,0/1/3/7,0/1/4/5,
0/2/0/6,0/2/1/5,0/2/2/3,0/2/4/7,
1/0/0/5,1/0/1/3,1/0/2/4,1/0/6/7,
1/1/0/3,1/1/1/5,1/1/2/7,1/1/4/6,
1/2/0/7,1/2/1/4,1/2/2/6,1/2/3/5,
2/0/0/1,2/0/2/6,2/0/3/5,2/0/4/7,
2/1/0/5,2/1/1/7,2/1/2/4,2/1/3/6,
2/2/0/2,2/2/1/6,2/2/3/7,2/2/4/5}{
  \pgfmathtruncatemacro{\eu}{8*\i+\u}
  \pgfmathtruncatemacro{\ev}{8*\i+\v}
  \pgfmathtruncatemacro{\du}{24+8*\j+\u}
  \pgfmathtruncatemacro{\dv}{24+8*\j+\v}
  \fill[red!72]  (\eu,\r) rectangle ++(1,1);
  \fill[blue!72] (\ev,\r) rectangle ++(1,1);
  \fill[blue!72] (\du,\r) rectangle ++(1,1);
  \fill[red!72]  (\dv,\r) rectangle ++(1,1);
}
\foreach \x in {0,...,48}
  \draw[black!35,line width=0.04pt] (\x,0)--(\x,36);
\foreach \y in {0,...,36}
  \draw[black!45,line width=0.08pt] (0,\y)--(48,\y);
\foreach \x in {0,8,16,24,32,40,48}
  \draw[black!55,line width=0.45pt] (\x,0)--(\x,36);
\draw[black,line width=0.9pt] (24,0)--(24,36);
\foreach \y in {0,4,8,12,16,20,24,28,32,36}
  \draw[black!55,line width=0.45pt] (0,\y)--(48,\y);
\foreach \y in {0,12,24,36}
  \draw[black,line width=0.9pt] (0,\y)--(48,\y);
\draw[black,line width=0.9pt] (0,0) rectangle (48,36);

\node at (4,-1.0)  {$e_{0,*}$};
\node at (12,-1.0) {$e_{1,*}$};
\node at (20,-1.0) {$e_{2,*}$};
\node at (28,-1.0) {$d_{0,*}$};
\node at (36,-1.0) {$d_{1,*}$};
\node at (44,-1.0) {$d_{2,*}$};
\node[anchor=east] at (-0.6,2)  {$M_{00}$};
\node[anchor=east] at (-0.6,6)  {$M_{01}$};
\node[anchor=east] at (-0.6,10) {$M_{02}$};
\node[anchor=east] at (-0.6,14) {$M_{10}$};
\node[anchor=east] at (-0.6,18) {$M_{11}$};
\node[anchor=east] at (-0.6,22) {$M_{12}$};
\node[anchor=east] at (-0.6,26) {$M_{20}$};
\node[anchor=east] at (-0.6,30) {$M_{21}$};
\node[anchor=east] at (-0.6,34) {$M_{22}$};

\end{tikzpicture}
\caption{The full \(36\times48\) coefficient matrix \(A_M\) obtained from
Fig.~\ref{fig:pair-partition-array}.  White, blue, and red represent
\(0,+1\), and \(-1\), respectively.  Thick lines separate the exponent
blocks and the nine cells \(M_{ij}\).}
\label{fig:AM-grid}
\end{figure}

Thus the paired-difference part consists of \(J^2L/2\) raw linear equations,
although linear dependencies may remain.

\begin{example}[Number of variables and equations for \(J=3,L=8,P=53\)]
For the \((3,8)\)-regular illustrative assignment in Example~1, the exponent labels have \(2JL=48\) scalar variables.  Substituting the displayed \((M_{ij})\) into \eqref{eq:matchingconstraint} gives \(J^2L/2=36\) raw paired-difference equations over \(\F_{53}\).
\end{example}

\subsection{CSS orthogonality}

For CSS orthogonality, the pair-partition array controls common nonzero positions between rows of \(H_X\) and rows of \(H_Z\).  The paired-difference constraints are imposed so that these common positions occur with even multiplicity.

\begin{lemma}[CSS orthogonality]
\label{lem:cssorthogonality}
The pairing constraints \eqref{eq:matchingconstraint} imply \(H_XH_Z^{\mathsf T}=0\).
\end{lemma}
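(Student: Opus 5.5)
We compute \(H_XH_Z^{\mathsf T}\) blockwise. The \((i,j)\) block of this product, for an \(X\)-block row \(i\) and a \(Z\)-block row \(j\), is
\[
\sum_{\ell\in\Z_L} C(e_{i\ell})\,C(d_{j\ell})^{\mathsf T}
\]
over \(\F_2\). It therefore suffices to show that each such \(P\times P\) block vanishes mod \(2\).

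The first step is the CPM identities. From the definition (row \(a\) of \(C(s)\) is incident to column \(a-s\)), \(C(s)\) is the permutation matrix of the map \(a\mapsto a-s\). Two facts follow:
\begin{itemize}
\item \(C(s)^{\mathsf T}=C(-s)\), since the transpose of a permutation matrix is the matrix of the inverse permutation;
\item \(C(s)C(t)=C(s+t)\), by composing the two shifts.
\end{itemize}
Hence each term of the block sum is
\[
C(e_{i\ell})C(d_{j\ell})^{\mathsf T}=C(e_{i\ell}-d_{j\ell}).
\]
I will verify the sign and composition convention explicitly via entries. Row \(a\) of \(C(s)\) is the indicator of column \(a-s\). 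Row \(b\) of \(C(d)\), i.e. column \(b\) of \(C(d)^{\mathsf T}\), is the indicator of \(b-d\). So the \((a,b)\) entry of the product is \([a-e=b-d]\), which equals \([b=a-(e-d)]\), i.e. the entry of \(C(e-d)\). This is the only place where care is needed.

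Next, use the pair partition \(M_{ij}\). Since it partitions \(\Z_L\) into \(L/2\) unordered pairs, we can regroup the sum over \(\ell\) as a sum over pairs \(\{u,v\}\in M_{ij}\):
\[
\sum_{\{u,v\}\in M_{ij}} \bigl(C(e_{i,u}-d_{j,u})+C(e_{i,v}-d_{j,v})\bigr).
\]
The constraint \eqref{eq:matchingconstraint} for this pair gives \(e_{i,u}-d_{j,u}=e_{i,v}-d_{j,v}\) in \(\F_P\), so the two summands are the same matrix. Their sum is \(2C(\cdot)=0\) over \(\F_2\). Summing over all pairs, the \((i,j)\) block is zero.

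Since \(i,j\in\{0,\dots,J-1\}\) were arbitrary, every block vanishes and \(H_XH_Z^{\mathsf T}=0\). I expect no real obstacle beyond the convention check above. One point worth noting is that the argument uses only that each \(M_{ij}\) covers every column exactly once, so no term of the block sum is left unpaired; primality of \(P\) is not needed here.
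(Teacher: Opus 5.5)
Your proof is correct and follows essentially the same route as the paper: write the \((i,j)\) block as \(\sum_{\ell} C(e_{i\ell}-d_{j\ell})\), then cancel over \(\F_2\) the two identical CPMs contributed by each pair of \(M_{ij}\). Your entrywise check of the CPM sign and composition convention, which the paper leaves implicit, is a welcome addition.
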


\begin{proof}
Fix a block row \(i\) of \(H_X\) and a block row \(j\) of \(H_Z\).  The corresponding \(P\times P\) block of \(H_XH_Z^{\mathsf T}\) is
\[
 \sum_{\ell=0}^{L-1} C(e_{i\ell})C(d_{j\ell})^{\mathsf T}
 =\sum_{\ell=0}^{L-1} C(e_{i\ell}-d_{j\ell}),
\]
where the sums are over \(\F_2\).  If \((u,v)\in M_{ij}\), then \eqref{eq:matchingconstraint} gives
\[
 e_{iu}-d_{ju}=e_{iv}-d_{jv}.
\]
The two CPMs contributed by \(u\) and \(v\) are therefore identical and cancel over \(\F_2\).  Since \(M_{ij}\) partitions all \(L\) block-column indices into pairs, the whole block sum is zero.  This holds for every \(i,j\), so \(H_XH_Z^{\mathsf T}=0\).
\end{proof}

\begin{theorem}[Pair-partition CPM-CSS construction]
\label{thm:construction}
Let \(L\) be even.  Suppose that a pair-partition array \((M_{ij})\) and exponent arrays \((e_{j\ell}),(d_{j\ell})\in\F_P^{J\times L}\) satisfy the paired-difference equations \eqref{eq:matchingconstraint}.  Then the lifted matrices \(H_X,H_Z\) in \eqref{eq:blockmatrices} satisfy \(H_XH_Z^{\mathsf T}=0\); every \(X\)-check and \(Z\)-check intersect in an even number of positions, which may exceed two.  The resulting code has block length \(n=LP\), row weight \(L\), and column weight \(J\).  Its dimension is
\[
 k=LP-\rank H_X-\rank H_Z.
\]
\end{theorem}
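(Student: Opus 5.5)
The proof will assemble facts already established, in four steps.

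\emph{Orthogonality.} Under the hypotheses, Lemma~\ref{lem:cssorthogonality} applies verbatim and gives \(H_XH_Z^{\mathsf T}=0\) over \(\F_2\). I will then restate this entrywise. The \((a,b)\) entry of \(H_XH_Z^{\mathsf T}\) is the parity of the number of common nonzero positions of the \(a\)-th row of \(H_X\) and the \(b\)-th row of \(H_Z\), so every \(X\)-check and \(Z\)-check meet in an even number of positions. To justify the clause ``which may exceed two,'' I will point back to the block computation in the proof of Lemma~\ref{lem:cssorthogonality}. Several pairs of \(M_{ij}\) may share the same mixed difference \(e_{iu}-d_{ju}\), and then their CPMs pile up on the same positions. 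The pairing argument therefore controls only parity, not the count, and nothing more is asserted.

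\emph{Length and weights.} Each of \(H_X,H_Z\) is a \(J\times L\) array of \(P\times P\) CPMs, so it has \(LP\) columns and \(n=LP\). Each block row has exactly one CPM in each of the \(L\) block columns, and a CPM has exactly one \(1\) per row and per column. Hence every row has weight exactly \(L\), and, by the same count over the \(J\) block rows, every column has weight exactly \(J\). No cancellation can occur here, because the blocks in a row occupy disjoint column ranges.

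\emph{Dimension.} Once orthogonality holds, the standard CSS construction \cite{calderbank1996,steane1996} applies with \(C_X=\ker H_Z\) and \(C_Z=\ker H_X\), whose duals contain the respective row spaces. The stabilizer group generated by the rows of \(H_X\) (as \(X\)-type) and \(H_Z\) (as \(Z\)-type) is abelian precisely because \(H_XH_Z^{\mathsf T}=0\). Its \(X\)- and \(Z\)-parts are independent, so it has \(\rank H_X+\rank H_Z\) independent generators on \(n\) qubits. Therefore \(k=LP-\rank H_X-\rank H_Z\).

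\emph{Where care is needed.} Nothing here is a real obstacle; the proof is essentially an assembly of the lemma with definitions. The one point needing care is to avoid claiming full rank: \(\rank H_X\) can be strictly less than \(JP\) because of circulant dependencies. The dimension must therefore be stated via binary ranks rather than as \(LP-2JP\).
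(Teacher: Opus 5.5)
Your proposal is correct and follows the paper's proof essentially step for step. Orthogonality comes from Lemma~\ref{lem:cssorthogonality}, the length and the row and column weights come from each CPM block having exactly one nonzero entry per row and column, and the dimension is the standard CSS rank formula. Your added remark also makes explicit a clause that the paper's proof leaves implicit: an \(X\)-check and a \(Z\)-check meet in as many positions as there are block columns \(\ell\) whose mixed difference \(e_{i\ell}-d_{j\ell}\) takes one particular value, so the even overlap can exceed two.
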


The theorem separates code validity from cycle screening.  Pairing makes
mixed \(X\)--\(Z\) overlaps cancel, while the CPM lift gives the claimed
degrees and length.  Cycle conditions are imposed separately below.

\begin{proof}
CSS orthogonality follows from Lemma~\ref{lem:cssorthogonality}.  Because every block of \(H_X\) and \(H_Z\) is a CPM, each lifted row has one nonzero entry in each of the \(L\) block columns, and each lifted column has one nonzero entry in each of the \(J\) block rows.  Hence the row and column weights are \(L\) and \(J\), respectively, and the block length is \(LP\).  The dimension formula is the CSS rank formula for \(H_XH_Z^{\mathsf T}=0\).
\end{proof}

The paired system has \(2JL\) scalar variables and \(J^2L/2\) raw equations.
This equation count is independent of \(P\); when field operations have unit
cost, its bit cost contains \(\log P\).  Explicitly forming the two sparse lifts, however, writes
\(2JLP\) nonzero entries, and binary rank computation acts on matrices whose
dimensions grow with \(P\).

\section{Pair-Partition Design}
\label{sec:pair-partition-design}

For fixed \(j\), let \(\Gamma_j^X\) be the edge-coloured multigraph on
\(\mathbb Z_L\) whose colour-\(i\) edges are the pairs in \(M_{ij}\).
Similarly, for fixed \(i\), let \(\Gamma_i^Z\) have colour-\(j\) edges
\(M_{ij}\).  Each colour class is a perfect matching, and two parallel edges
are counted as a cycle of length two.  These graphs depend only on \(M\), not
on the lift size or the exponents.

For a binary matrix \(H\), let \(g_{\mathrm T}(H)\) denote the girth of its
Tanner graph.  For a multigraph \(G\), let \(g(G)\) denote the length of its
shortest graph cycle, with \(g(G)=\infty\) when \(G\) has no cycle.

\Needspace{10\baselineskip}
\begin{proposition}[Necessary condition for avoiding forced 4-cycles]
\label{prop:M-no-reuse}
If both \(H_X\) and \(H_Z\) have Tanner girth at least six, then
\[
 M_{ij}\cap M_{ij'}=\varnothing\quad(j\ne j'),
 \qquad
 M_{ij}\cap M_{i'j}=\varnothing\quad(i\ne i').
\]
\end{proposition}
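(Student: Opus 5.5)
We argue by contraposition: assuming a repeated pair, we exhibit a 4-cycle in the Tanner graph of \(H_X\) or \(H_Z\). Recall the standard CPM cycle condition: for block rows \(a\ne b\) and block columns \(u\ne v\) of a CPM lift with exponents \(c_{\cdot\cdot}\), the Tanner graph contains a 4-cycle exactly when
\[
 c_{a u}-c_{a v}+c_{b v}-c_{b u}=0 \pmod P .
\]
This holds because the closed path row block \(a\to u\to b\to v\to a\) composes to the CPM \(C(c_{au}-c_{bu}+c_{bv}-c_{av})\). That CPM has a fixed point iff its exponent is zero, and a fixed point is precisely a lifted 4-cycle. So the task reduces to deriving such a zero alternating sum from the paired-difference equations \eqref{eq:matchingconstraint}.

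First case: suppose \(\{u,v\}\in M_{ij}\cap M_{ij'}\) with \(j\ne j'\). Applying \eqref{eq:matchingconstraint} to \((i,j)\) gives \(d_{ju}-e_{iu}=d_{jv}-e_{iv}\). Applying it to \((i,j')\) gives \(d_{j'u}-e_{iu}=d_{j'v}-e_{iv}\). Subtracting the second from the first eliminates the \(e\)-terms:
\[
 d_{ju}-d_{j'u}=d_{jv}-d_{j'v}.
\]
This is the 4-cycle condition for rows \(j,j'\) and columns \(u,v\) of \(D\). Since \(u\ne v\) and \(j\ne j'\), it yields a 4-cycle in the Tanner graph of \(H_Z\), contradicting girth at least six.

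Second case: suppose \(\{u,v\}\in M_{ij}\cap M_{i'j}\) with \(i\ne i'\). Subtracting the two equations eliminates the \(d_j\)-terms and gives
\[
 e_{iu}-e_{i'u}=e_{iv}-e_{i'v}.
\]
This is a 4-cycle in \(H_X\).

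The only point requiring care is the CPM fixed-point criterion. We need to check the orientation convention of \(C(s)\) (row \(a\) is incident to column \(a-s\)) so that the alternating sum appears with consistent signs. The conclusion does not depend on the sign convention, since the vanishing of the sum is symmetric under negation.
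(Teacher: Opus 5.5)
Your proof is correct and is essentially the paper's argument. The paper phrases a repeated pair as parallel edges in \(\Gamma_i^Z\) or \(\Gamma_j^X\) and invokes the length-two case of Proposition~\ref{prop:pair-cycle-obstruction}, whose telescoping sum of two paired-difference equations is exactly your subtraction giving Fossorier's 4-cycle condition; your fixed-point check only makes that cited step self-contained.
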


\begin{proof}
If the same pair occurs in \(M_{ij}\) and \(M_{ij'}\), then
\(\Gamma_i^Z\) has two parallel edges, which force a Tanner 4-cycle in
\(H_Z\).  If it occurs in \(M_{ij}\) and \(M_{i'j}\), then
\(\Gamma_j^X\) has two parallel edges and forces a Tanner 4-cycle in
\(H_X\).  These are the length-two cases of
Proposition~\ref{prop:pair-cycle-obstruction}.
\end{proof}

The PP definition guarantees CSS orthogonality without the no-reuse
condition.  The design stage adds this condition when the target Tanner
girth is at least six.

\subsection{Generating the pair-partition array}

The array \(M\) can be generated by a direct backtracking search on perfect
matchings of \(\mathbb Z_L\).  Let \(\mathcal P_L\) be the set of all
\((L-1)!!\) pair partitions of \(\mathbb Z_L\).  While filling the cells in
a fixed order, maintain the edge unions
\[
 R_i=\bigcup_{j:\,M_{ij}\text{ assigned}}M_{ij},\qquad
 C_j=\bigcup_{i:\,M_{ij}\text{ assigned}}M_{ij}.
\]
At cell \((i,j)\), try a matching \(Q\in\mathcal P_L\) only if
\begin{equation}
 Q\cap R_i=\varnothing,\qquad Q\cap C_j=\varnothing.
 \label{eq:M-generation-disjointness}
\end{equation}
These two tests enforce Proposition~\ref{prop:M-no-reuse} as the array is built.  For a target
Tanner girth \(g\in\{6,8\}\), also reject \(Q\) whenever either tentative
union \(R_i\cup Q\) or \(C_j\cup Q\) contains a graph cycle of length less
than \(g/2\).  Thus the girth-eight search rejects parallel edges and
triangles immediately.  If no matching survives, undo the preceding cell
and try its next matching; after all \(J^2\) cells have been assigned, output
the resulting \(M\) and form the linear system \(A_Mx=0\).

Enumerating every matching makes this procedure exhaustive; sampling or
randomly ordering \(\mathcal P_L\) gives the version used for a bounded
search.  Vertex relabelling permits the symmetry-breaking choice
\(M_{00}=\{\{0,1\},\{2,3\},\ldots,\{L-2,L-1\}\}\) without losing any
isomorphism class.  An edge-overlap or triangle test uses
\(O(JL)\) adjacency work per attempted cell, although the worst-case number
of backtracking nodes is exponential in \(J^2\).  This entire stage depends
only on \(J,L\), not on \(P,E\), or \(D\).

The six pairing graphs obtained from the array in
Example~\ref{ex:pair-partition-array} are shown in
Fig.~\ref{fig:example1-pairing-graphs}.
\begin{figure}[!t]
\centering
\begingroup
\tikzset{
  pairzero/.style={draw=blue!70!black,line width=0.8pt},
  pairone/.style={draw=red!75!black,line width=0.8pt,dashed},
  pairtwo/.style={draw=green!45!black,line width=0.9pt,dotted},
  pairvertex/.style={circle,draw=black,fill=white,inner sep=0pt,
    minimum size=3.2mm,font=\scriptsize}
}
\newcommand{\paircoords}{%
  \foreach \v/\a in {0/90,1/45,2/0,3/-45,4/-90,5/-135,6/180,7/135}
    \coordinate (v\v) at (\a:0.90cm);%
}
\newcommand{\pairnodes}{%
  \foreach \v in {0,...,7} \node[pairvertex] at (v\v) {\v};%
}
\newcommand{\pairtitle}[2]{\multicolumn{1}{c}{$#1\;(g=#2)$}}
\setlength{\tabcolsep}{3pt}
\begin{tabular}{ccc}
\pairtitle{\Gamma_0^X}{3} & \pairtitle{\Gamma_1^X}{4} &
\pairtitle{\Gamma_2^X}{3} \\[1mm]
\begin{tikzpicture}
  \paircoords
  \draw[pairzero] (v0)--(v3) (v1)--(v2) (v4)--(v6) (v5)--(v7);
  \draw[pairone]  (v0)--(v5) (v1)--(v3) (v2)--(v4) (v6)--(v7);
  \draw[pairtwo]  (v0)--(v1) (v2)--(v6) (v3)--(v5) (v4)--(v7);
  \pairnodes
\end{tikzpicture}
&
\begin{tikzpicture}
  \paircoords
  \draw[pairzero] (v0)--(v2) (v1)--(v6) (v3)--(v7) (v4)--(v5);
  \draw[pairone]  (v0)--(v3) (v1)--(v5) (v2)--(v7) (v4)--(v6);
  \draw[pairtwo]  (v0)--(v5) (v1)--(v7) (v2)--(v4) (v3)--(v6);
  \pairnodes
\end{tikzpicture}
&
\begin{tikzpicture}
  \paircoords
  \draw[pairzero] (v0)--(v6) (v1)--(v5) (v2)--(v3) (v4)--(v7);
  \draw[pairone]  (v0)--(v7) (v1)--(v4) (v2)--(v6) (v3)--(v5);
  \draw[pairtwo]  (v0)--(v2) (v1)--(v6) (v3)--(v7) (v4)--(v5);
  \pairnodes
\end{tikzpicture}
\\[1mm]
\pairtitle{\Gamma_0^Z}{3} & \pairtitle{\Gamma_1^Z}{3} &
\pairtitle{\Gamma_2^Z}{4} \\[1mm]
\begin{tikzpicture}
  \paircoords
  \draw[pairzero] (v0)--(v3) (v1)--(v2) (v4)--(v6) (v5)--(v7);
  \draw[pairone]  (v0)--(v2) (v1)--(v6) (v3)--(v7) (v4)--(v5);
  \draw[pairtwo]  (v0)--(v6) (v1)--(v5) (v2)--(v3) (v4)--(v7);
  \pairnodes
\end{tikzpicture}
&
\begin{tikzpicture}
  \paircoords
  \draw[pairzero] (v0)--(v5) (v1)--(v3) (v2)--(v4) (v6)--(v7);
  \draw[pairone]  (v0)--(v3) (v1)--(v5) (v2)--(v7) (v4)--(v6);
  \draw[pairtwo]  (v0)--(v7) (v1)--(v4) (v2)--(v6) (v3)--(v5);
  \pairnodes
\end{tikzpicture}
&
\begin{tikzpicture}
  \paircoords
  \draw[pairzero] (v0)--(v1) (v2)--(v6) (v3)--(v5) (v4)--(v7);
  \draw[pairone]  (v0)--(v5) (v1)--(v7) (v2)--(v4) (v3)--(v6);
  \draw[pairtwo]  (v0)--(v2) (v1)--(v6) (v3)--(v7) (v4)--(v5);
  \pairnodes
\end{tikzpicture}
\end{tabular}

\vspace{1mm}
\begin{tikzpicture}[baseline=-0.6ex]
  \draw[pairzero] (0,0)--(0.65,0);
  \node[anchor=west,font=\scriptsize] at (0.72,0) {0};
  \draw[pairone] (1.18,0)--(1.83,0);
  \node[anchor=west,font=\scriptsize] at (1.90,0) {1};
  \draw[pairtwo] (2.36,0)--(3.01,0);
  \node[anchor=west,font=\scriptsize] at (3.08,0) {2};
\end{tikzpicture}
\endgroup
\caption{Pairing graphs obtained from the array in
Example~\ref{ex:pair-partition-array}.  In \(\Gamma_j^X\), an edge from
\(M_{ij}\) has colour \(i\); in \(\Gamma_i^Z\), it has colour \(j\).
Colours \(0,1,2\) are shown as blue solid, red dashed, and green dotted
lines, respectively.}
\label{fig:example1-pairing-graphs}
\end{figure}

Along a cycle in a pairing graph, the paired-difference equations telescope,
so the signed sum of CPM exponents along the corresponding Tanner walk is
zero.

\begin{proposition}[Short cycles forced by pairing graphs]
\label{prop:pair-cycle-obstruction}
If \(\Gamma_j^X\) contains a graph cycle of length \(r\ge2\), then the Tanner
graph of \(H_X\) contains a cycle of length at most \(2r\).  The analogous
statement holds for \(\Gamma_i^Z\) and \(H_Z\).  Hence
\[
 g_{\mathrm T}(H_X)\le 2g(\Gamma_j^X),\qquad
 g_{\mathrm T}(H_Z)\le 2g(\Gamma_i^Z).
\]
\end{proposition}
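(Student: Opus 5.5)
The plan is to turn a cycle of length \(r\) in \(\Gamma_j^X\) into a closed walk of length \(2r\) in the Tanner graph of \(H_X\). That walk will have net CPM shift zero and no backtracking, so it must contain a Tanner cycle of length at most \(2r\). The \(\Gamma_i^Z\) case is the same argument with the roles of \(E\) and \(D\) exchanged. The girth inequalities follow by taking \(r=g(\Gamma_j^X)\) whenever that quantity is finite, and are vacuous otherwise.

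\textbf{Telescoping.} Write the cycle as vertices \(\ell_0,\ell_1,\dots,\ell_{r-1},\ell_r=\ell_0\) of \(\mathbb Z_L\). The edge \(\{\ell_t,\ell_{t+1}\}\) has colour \(i_t\), meaning it lies in \(M_{i_t j}\). Equation \eqref{eq:matchingconstraint} for this pair rearranges to
\[
 e_{i_t,\ell_{t+1}}-e_{i_t,\ell_t}=d_{j,\ell_{t+1}}-d_{j,\ell_t}.
\]
Summing over \(t=0,\dots,r-1\), the right-hand side telescopes to zero. Hence the alternating exponent sum \(\sum_t\bigl(e_{i_t,\ell_{t+1}}-e_{i_t,\ell_t}\bigr)\) vanishes in \(\F_P\).

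\textbf{Lifting the protograph walk.} In the protograph, consider the closed walk
\[
 \ell_0\to i_0\to\ell_1\to i_1\to\cdots\to i_{r-1}\to\ell_0 ,
\]
which alternates column nodes and row nodes and has length \(2r\). Start at lifted column \(c\) in block column \(\ell_0\). With the convention that row \(a\) of \(C(s)\) meets column \(a-s\), we move from column \(c\) in block column \(\ell_t\) to row \(c+e_{i_t,\ell_t}\) of block row \(i_t\). From there we move to column \(c+e_{i_t,\ell_t}-e_{i_t,\ell_{t+1}}\) of block column \(\ell_{t+1}\). After \(r\) steps the net shift is the negative of the vanishing sum above. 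So the lifted walk returns to column \(c\) of block column \(\ell_0\); this is the Fossorier congruence \cite{fossorier2004}.

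\textbf{Absence of backtracking.} This is the step I expect to need care, including the wrap-around at \(t=r-1\to0\) and the degenerate case \(r=2\) of parallel edges. There are two kinds of turn to check.
\begin{itemize}
\item At a row node \(i_t\), the incoming and outgoing block columns \(\ell_t\) and \(\ell_{t+1}\) are distinct, because every pair in a matching consists of two different indices.
\item At a column node \(\ell_{t+1}\), the incoming row block \(i_t\) and outgoing row block \(i_{t+1}\) are distinct. A graph cycle uses two different edges at each vertex, and two edges of the same colour cannot share a vertex, since each colour class is a perfect matching.
\end{itemize}
For \(r=2\), the two parallel edges have different colours for the same reason. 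Since consecutive lifted edges lie in different blocks, the lifted closed walk is non-backtracking. Every closed non-backtracking walk contains a cycle whose length is at most the walk length: take the first repeated vertex along the walk. This gives a Tanner cycle of length at most \(2r\) in \(H_X\).
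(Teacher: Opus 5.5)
Your proposal is correct and follows the paper's proof. Both telescope the paired-difference equations along the pairing-graph cycle to obtain Fossorier's zero-sum condition for the length-\(2r\) Tanner walk. Both then use the fact that each colour class is a matching to exclude immediate reversals and extract a Tanner cycle of length at most \(2r\). Your version makes explicit several points the paper leaves implicit: the lifted index bookkeeping, the distinctness check at row nodes as well as at column nodes (including the wrap-around and the \(r=2\) parallel-edge case), and the first-repeated-vertex extraction of the cycle.
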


\begin{proof}
Let \(c_0,c_1,\ldots,c_{r-1},c_r=c_0\) be a cycle in \(\Gamma_j^X\), with
\(\{c_t,c_{t+1}\}\in M_{i_tj}\).  The paired-difference equation
\eqref{eq:matchingconstraint} gives
\[
 e_{i_t,c_t}-e_{i_t,c_{t+1}}
 =d_{j,c_t}-d_{j,c_{t+1}}.
\]
Summing in \(t\) makes the right-hand side telescope to zero.  This is
Fossorier's zero-sum condition for the corresponding length-\(2r\) Tanner walk in
\(H_X\).  Consecutive edges have different colours because each colour class
is a matching, so this closed walk never immediately reverses an edge and
contains a simple Tanner
cycle of length at most \(2r\).  Interchanging \(e\) and \(d\) proves the
\(Z\)-side statement.
\end{proof}

This obstruction is inexpensive to test.  Each of the \(2J\) pairing graphs
has \(L\) vertices and \(JL/2\) edges, so ordinary graph traversal detects
parallel edges, triangles, or general cycles in
\(O(J^2L)\) total adjacency work, independent of \(P\).  It can therefore
reject a bad \(M\) before any finite-field exponent search.

A triangle-free \(J\)-regular pairing graph also imposes a parameter-level
restriction.

\begin{corollary}[Necessary test for girth eight]
\label{cor:girth-eight-prefilter}
If both CSS Tanner graphs have girth at least eight, every pairing graph is
simple, \(J\)-regular, and triangle-free.  In particular,
\[
 L\ge2J.
\]
\end{corollary}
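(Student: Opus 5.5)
The plan is to read everything off Proposition~\ref{prop:pair-cycle-obstruction} and then finish with a short counting argument. Assume \(g_{\mathrm T}(H_X)\ge 8\) and \(g_{\mathrm T}(H_Z)\ge 8\). Proposition~\ref{prop:pair-cycle-obstruction} gives \(2g(\Gamma_j^X)\ge g_{\mathrm T}(H_X)\ge 8\), so \(g(\Gamma_j^X)\ge 4\). The same bound holds for every \(\Gamma_i^Z\). Because a pair of parallel edges counts as a cycle of length two, \(g\ge4\) rules out both parallel edges and triangles. Hence every pairing graph is simple and triangle-free; the parallel-edge part is also Proposition~\ref{prop:M-no-reuse}.

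Regularity follows from the definition of the graphs. \(\Gamma_j^X\) is the union of the \(J\) perfect matchings \(M_{0j},\ldots,M_{J-1,j}\) of \(\mathbb Z_L\). Each vertex therefore has exactly one incident edge of each colour, so its degree in the multigraph is \(J\). Since the graph is simple, these \(J\) edges go to \(J\) distinct neighbours, and \(\Gamma_j^X\) is \(J\)-regular as a simple graph. The same holds for \(\Gamma_i^Z\).

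For \(L\ge2J\), take any edge \(\{u,v\}\) in a pairing graph, which exists since \(J\ge1\). Let \(N(u)\) and \(N(v)\) be the neighbourhoods of \(u\) and \(v\); each has size \(J\). Triangle-freeness makes \(N(u)\cap N(v)=\varnothing\), since a common neighbour would close a triangle. So \(L\ge |N(u)\cup N(v)|=2J\). Equivalently, one can cite Mantel's bound \(JL/2\le L^2/4\), but the neighbourhood argument is more direct.

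None of these steps is a real obstacle. The one point that needs care is that the cited proposition covers \(r=2\), so that parallel edges are excluded along with triangles. The paper's convention of counting two parallel edges as a cycle of length two settles this.
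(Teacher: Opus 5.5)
Your proof is correct and follows the paper's argument: both obtain simplicity and triangle-freeness from Proposition~\ref{prop:pair-cycle-obstruction} with \(r=2,3\). The differences are small. You make the \(J\)-regularity explicit, which the paper leaves implicit, and you close with the single-edge bound \(|N(u)\cup N(v)|=2J\le L\), whereas the paper cites Mantel's bound \(JL/2\le L^2/4\); these are equivalent here, since your per-edge inequality is exactly what the standard proof of Mantel's bound sums over all edges.
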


\begin{proof}
Parallel edges and triangles would force Tanner cycles of length at most four
and six.  Each pairing graph has \(JL/2\) edges, whereas Mantel's bound gives
at most \(L^2/4\) edges in a triangle-free graph on \(L\) vertices.  Thus
\(JL/2\le L^2/4\), which gives \(L\ge2J\).
\end{proof}

Thus \(L<2J\) is a constant-time impossibility check for this girth-eight
design stage; no pairing or exponent enumeration should be attempted in
that range.

There is also an exponent-independent ceiling which does not use \(M\).
For two block rows and three block columns, the signed exponent sum in
Fossorier's length-12 cycle condition cancels identically.  The condition
therefore guarantees a Tanner cycle of length at most twelve, regardless of
the choice of \(M,E,D\).

\begin{proposition}[Universal girth-12 upper bound]
\label{prop:universal-twelve-obstruction}
For \(J\ge2\) and \(L\ge3\), every CPM lift of the complete
\(J\times L\) protograph has Tanner girth at most twelve.
\end{proposition}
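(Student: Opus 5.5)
The plan is to exhibit an explicit closed walk of length twelve in the Tanner graph that uses only two block rows \(j_0,j_1\) and three block columns \(\ell_0,\ell_1,\ell_2\). We then verify Fossorier's zero-sum condition for it identically in the exponents, and finally argue that the walk is tailless, so it contains a simple cycle of length at most twelve. The construction uses no information about \(M\), \(E\), or \(D\). It therefore applies to any CPM lift \(H=(C(h_{j\ell}))\) of the complete \(J\times L\) protograph, and in particular to both \(H_X\) and \(H_Z\).

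Write \(a_\ell=h_{j_0\ell}-h_{j_1\ell}\) for \(\ell\in\{\ell_0,\ell_1,\ell_2\}\). A closed walk in the protograph alternating between the two check rows gives a Tanner closed walk in the lift exactly when the alternating exponent sum vanishes. For a passage \(j_0\to\ell\to j_1\) the contribution is \(+a_\ell\), and for \(j_1\to\ell\to j_0\) it is \(-a_\ell\). The key step is to choose the column sequence so that each \(a_{\ell_t}\) appears once with each sign. Take the protograph walk
\[
j_0\to\ell_0\to j_1\to\ell_1\to j_0\to\ell_2\to j_1\to\ell_0\to j_0\to\ell_1\to j_1\to\ell_2\to j_0.
\]
Its signed sum is \(a_0-a_1+a_2-a_0+a_1-a_2=0\) for every choice of exponents. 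This walk has length twelve and visits each of the six protograph edges \(\{j_s,\ell_t\}\) exactly twice. This is the \(K_{2,3}\) closed walk in which every edge is used twice, each time in opposite directions.

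The main obstacle is showing that the lifted closed walk contains a genuine cycle, rather than collapsing through backtracking. In a lift, immediate reversal of a Tanner edge would require consecutive protograph edges to coincide. In the chosen sequence, consecutive steps always move to a different column or a different row. So no two consecutive protograph edges are equal, and this includes the wrap-around step from \(\ell_2\to j_0\) back to \(j_0\to\ell_0\). Hence the lifted closed walk is non-backtracking, and cyclically so. A closed non-backtracking walk of positive length in a graph contains a cycle of length at most its length. The argument is the same as the one already used in the proof of Proposition~\ref{prop:pair-cycle-obstruction}: take the first repeated vertex and extract the loop. The extracted loop cannot have length two, because that would be a backtrack.

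We conclude that the Tanner girth is at most twelve. The hypotheses \(J\ge2\) and \(L\ge3\) are exactly what is needed to select two block rows and three distinct block columns.
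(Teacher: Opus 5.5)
Your proof is correct and takes the same approach as the paper, which simply invokes Fossorier's cycle condition on two distinct block rows and three distinct block columns, noting that the signed exponent sum of the length-12 walk cancels identically. You additionally write out the explicit walk and the non-backtracking cycle-extraction step, both of which the paper leaves to the citation.
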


\begin{proof}
Apply Fossorier's CPM cycle condition \cite{fossorier2004} to two distinct
block rows and three distinct block columns.
\end{proof}

Once \(J\ge2\) and \(L\ge3\) are known, this \(O(1)\) family-level test rules
out exponent searches with target girth greater than twelve.

After an array \(M\) passes these combinatorial tests, cycles that depend on
the exponent labels are checked using Fossorier's CPM cycle condition
\cite{fossorier2004}.  For example, block rows \(i,i'\) and block columns
\(\ell,\ell'\) form a Tanner 4-cycle in \(H_X\) exactly when
\begin{equation}
 (e_{i\ell}-e_{i\ell'})+(e_{i'\ell'}-e_{i'\ell})=0
 \quad\text{in }\F_P.
 \label{eq:four-cycle-test}
\end{equation}
The corresponding test with \(d\) is applied to \(H_Z\).  Thus girth at
least six requires \(2\binom{J}{2}\binom{L}{2}\) nonzero tests in total.
For target girth eight, the lifted 6-cycle conditions are also tested.  For
fixed \(J,L\), the number of these modular tests is independent of \(P\),
although their bit cost contains \(\log P\).

\section{Code Candidate Search}
\label{sec:construction-procedure}

A candidate is a triple \((M,E,D)\), where \(M=(M_{ij})\) is a
pair-partition array and
\(E=(e_{j\ell}),D=(d_{j\ell})\in\F_P^{J\times L}\).  The array \(M\) is
first required to pass the design tests in Section~\ref{sec:pair-partition-design}.
The exponents must then satisfy the paired-difference equations and the CPM
cycle tests.

\subsection{Equivalent exponent arrays}

If \((E,D)\) satisfies the paired-difference equations, then so does
\begin{equation}
 e'_{i\ell}=e_{i\ell}+\tau_\ell+\alpha_i,\qquad
 d'_{j\ell}=d_{j\ell}+\tau_\ell+\beta_j,
 \quad \tau_\ell,\alpha_i,\beta_j\in\F_P.
 \label{eq:equivalent-exponents}
\end{equation}
These changes relabel cyclic lift indices within block columns and block rows.
They preserve CSS orthogonality, binary ranks, Tanner girth, and distance.
The search may therefore fix a convenient representative as a gauge
convention for the exponent description.

\subsection{Solution space and candidate generation}

For fixed \(M\) and prime \(P\), Eq.~\eqref{eq:matchingconstraint} is a
homogeneous linear system in the \(2JL\) entries of \(E,D\).  We row-reduce
this system over \(\F_P\), obtain a basis for its solution space, and
enumerate or sample coefficient vectors within a prescribed budget.  An
exponent assignment is rejected when
the signed exponent sum for any 4-cycle is zero.  A girth-eight search also
rejects every lifted 6-cycle.

\subsection{Distance-guided forbidden patterns}

Suppose a rejected lift contains a low-weight logical operator with support
\[
 S=\{(\ell_t,a_t):1\le t\le w\}
 \subseteq\mathbb Z_L\times\mathbb F_P,
\]
where \(\ell_t\) is the block-column index and \(a_t\) is the column lift
index.  The common translation
\((\ell_t,a_t)\mapsto(\ell_t,a_t+c)\) preserves the CPM structure.  A
finite bank \(B\) therefore stores the block-column indices and relative
lift-index differences, such as \(a_t-a_1\), rather than an absolute
translate.

For a later exponent assignment with the same \(L,P\) and compatible
block-column labels, each stored pattern is tested against the current CPM
shifts; the new matching array may differ from the old one.  If some translated
support again has zero syndrome on the relevant CSS side, that assignment is rejected
before the full binary search.  A newly found low-weight logical operator is
converted to the same relative form and may be added to later banks.  Thus
observed failure mechanisms influence subsequent exponent choices.

The zero-syndrome test is especially simple for CPMs.  Normalize a stored support
as \(\{(\ell_t,r_t):1\le t\le w\}\), where \(r_1=0\).  For an \(X\)-type
pattern, define
\begin{equation}
 q_{j,b}(D):=
 \sum_{t=1}^{w}\mathbf 1\{r_t+d_{j\ell_t}=b\}\pmod 2,
 \qquad 0\le j<J,\quad b\in\F_P.
 \label{eq:forbidden-pattern-parity}
\end{equation}
A translate of this support has zero \(H_Z\)-syndrome exactly when
\(q_{j,b}(D)=0\) for every \(j,b\).  Indeed, the column
\((\ell_t,r_t+c)\) is incident to the check-row lift index
\(b=r_t+c+d_{j\ell_t}\); the common \(c\) merely shifts all check-row lift
indices.  For a \(Z\)-type pattern, the same test uses
\(e_{i\ell_t}\) in place of \(d_{j\ell_t}\).

\begin{proposition}[Universal cyclic automorphism of a CPM-CSS lift]
\label{prop:universal-cyclic-automorphism}
Let \(Q\) be the qubit-permutation matrix induced by the simultaneous lift
shift \(T:(\ell,a)\mapsto(\ell,a+1)\).  There are check-row permutation
matrices \(R_X,R_Z\) such that
\begin{equation}
 H_XQ=R_XH_X,\qquad H_ZQ=R_ZH_Z.
 \label{eq:cyclic-matrix-automorphism}
\end{equation}
Consequently, \(T\) is a color-preserving automorphism of both Tanner graphs
and of the CSS stabilizer.  In particular, the code automorphism group
contains the cyclic subgroup \(\langle T\rangle\cong C_P\).
\end{proposition}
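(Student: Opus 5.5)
The plan is a direct block-wise verification using the convention for \(C(s)\).

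First, describe \(Q\) concretely. Index qubits by pairs \((\ell,a)\in\Z_L\times\F_P\), and index check rows of \(H_X\) (resp.\ \(H_Z\)) by \((j,b)\in\{0,\dots,J-1\}\times\F_P\). By definition, row \((j,b)\) of \(H_X\) is incident to column \((\ell,a)\) exactly when \(a=b-e_{j\ell}\). For \(H_Z\), replace \(e_{j\ell}\) by \(d_{j\ell}\).

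Second, check that \(T\) preserves incidence after shifting every check index by one. Let \(\sigma:(j,b)\mapsto(j,b+1)\) act on check rows. Then
\[
a=b-e_{j\ell}\iff a+1=(b+1)-e_{j\ell}.
\]
So \((j,b)\sim(\ell,a)\) if and only if \(\sigma(j,b)\sim T(\ell,a)\). This is exactly the statement that the pair \((\sigma,T)\) is a Tanner-graph automorphism that preserves node types. In matrix form, it says \(H_XQ=R_XH_X\), where \(R_X\) is the permutation matrix of \(\sigma\) (or of its inverse, depending on the row- versus column-action convention).

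The cleanest way to write this is block-wise. Take \(Q=\operatorname{diag}(C(1),\dots,C(1))\), with \(L\) copies, and \(R_X=\operatorname{diag}(C(1),\dots,C(1))\), with \(J\) copies. CPMs commute and satisfy \(C(s)C(t)=C(s+t)\), so
\[
C(e_{j\ell})C(1)=C(1)C(e_{j\ell})
\]
for every block. This gives \(H_XQ=R_XH_X\). The identical computation with \(D\) gives \(R_Z=R_X\), and the argument uses no property of \(M\) at all.

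Third, deduce the consequences.
- From \(H_XQ=R_XH_X\), the row space is invariant: \(\rowspan(H_X)\) is mapped to itself by \(Q^{\mathsf T}\), since \(R_X\) merely permutes rows. The same holds for \(\rowspan(H_Z)\).
- Hence \(T\) maps the \(X\)- and \(Z\)-stabilizer groups to themselves, and so is an automorphism of the CSS stabilizer.
- \(T^P=\mathrm{id}\), and \(T^k\) moves \((\ell,0)\) to \((\ell,k)\ne(\ell,0)\) for \(0<k<P\). So \(\langle T\rangle\) has order exactly \(P\) and is cyclic, i.e.\ \(\langle T\rangle\cong C_P\).

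The only delicate point is bookkeeping: whether \(Q\) acts on the right as \(C(1)\) or \(C(1)^{\mathsf T}\), and which direction \(R_X\) shifts. Commutativity of circulants makes either consistent choice work, so this is not a real obstacle.
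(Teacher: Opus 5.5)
Your proof is correct and follows the paper's argument: a single cyclic shift of the row lift indices works in every CPM of a block row, independently of the exponent. Your block-diagonal identity \(C(e_{j\ell})C(1)=C(1)C(e_{j\ell})\), with \(R_X=R_Z=\operatorname{diag}(C(1),\dots,C(1))\), is just the explicit matrix form of that observation, and you derive row-space invariance and the order \(P\) the same way the paper does.
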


\begin{proof}
In a CPM \(C(s)\), simultaneously increasing the column lift index by
one changes the row lift index \(b\) to \(b+1\).  The same cyclic row permutation
works in every CPM of a block row, independently of its exponent \(s\).
Applying it to all block rows gives \(R_X\) and \(R_Z\) in
Eq.~\eqref{eq:cyclic-matrix-automorphism}.  Thus the permutation preserves
the two check-row spaces separately.  Its order is \(P\), which gives the
stated cyclic subgroup.
\end{proof}

\begin{corollary}[Cyclic orbit of a logical operator]
\label{cor:cyclic-logical-orbit}
For either CSS side, if
\[
 v\in\ker H\setminus\rowspan(G),
 \qquad (H,G)=(H_Z,H_X)\ \text{or}\ (H_X,H_Z),
\]
then \(T^cv\in\ker H\setminus\rowspan(G)\) and
\(\wt(T^cv)=\wt(v)\) for every \(c\in\F_P\).  If \(P\) is prime and
\(0<\wt(v)<P\), these \(P\) binary vectors have pairwise distinct supports.
Their classes in the logical quotient \(\ker H/\rowspan(G)\) are determined
by the induced action of \(T\) and may coincide or be linearly dependent.
\end{corollary}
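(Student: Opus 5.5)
The plan is to combine the matrix identities of Proposition~\ref{prop:universal-cyclic-automorphism} with a short orbit argument that uses only the primality of \(P\). First, the membership claim. Take \((H,G)=(H_Z,H_X)\); the other case is symmetric. From \(HQ=R_HH\) with \(R_H\) a permutation matrix, we have \(HQv=R_HHv=0\) whenever \(Hv=0\), so \(T^cv=Q^cv\in\ker H\) for every \(c\).

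For the row space, use \(GQ=R_GG\). Transposing gives \(Q^{\mathsf T}G^{\mathsf T}=G^{\mathsf T}R_G^{\mathsf T}\). Since \(Q\) and \(R_G\) are permutation matrices, this yields \(Q G^{\mathsf T}=G^{\mathsf T}R_G^{-\mathsf T}\) after inverting. Hence \(Q\) maps \(\rowspan(G)\), the column span of \(G^{\mathsf T}\), into itself, and being invertible it maps it onto itself; the same holds for \(Q^{-1}\). So if \(Q^cv\) lay in \(\rowspan(G)\), then \(v=Q^{-c}Q^cv\) would as well, a contradiction. Weight preservation is immediate because \(Q\) permutes coordinates.

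Next, distinct supports. The map \(T\) acts on coordinates \(\Z_L\times\F_P\) with order \(P\), and the stabilizer of the support set \(S=\operatorname{supp}(v)\) under \(\langle T\rangle\) is a subgroup of \(C_P\). Since \(P\) is prime, this subgroup is either trivial or all of \(C_P\). Suppose it is all of \(C_P\). Then \(S\) is a union of full orbits \(\{\ell\}\times\F_P\), so \(\wt(v)\) is a multiple of \(P\). This contradicts \(0<\wt(v)<P\). Therefore the stabilizer is trivial, the \(P\) translates \(T^cS\) are pairwise distinct, and the vectors \(T^cv\) have pairwise distinct supports.

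Finally, the quotient statement. Because \(Q\) preserves both \(\ker H\) and \(\rowspan(G)\), it induces a well-defined linear map on \(\ker H/\rowspan(G)\). The classes \([T^cv]\) are simply the orbit of \([v]\) under this induced map. I would note, with no further proof, that nothing forces these classes to be distinct or independent. For example, \(\sum_c T^cv\) may lie in \(\rowspan(G)\), which makes the classes linearly dependent.

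The only step needing care is the row-space invariance, specifically getting the transposition and inverse in the right direction. The remainder is routine.
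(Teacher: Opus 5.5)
Your argument is correct and follows the paper's proof. Invariance of \(\ker H\) and \(\rowspan(G)\) under \(T\) gives the logical-representative claim; you derive the row-space invariance explicitly from \(GQ=R_GG\) (using \(R_G^{-\mathsf T}=R_G\)), where the paper simply cites Proposition~\ref{prop:universal-cyclic-automorphism}. The prime-order orbit argument then shows that a shift-fixed support is a union of full block columns, so its weight is divisible by \(P\). As in the paper, your closing comment on logical classes only observes that distinctness or independence is not forced, which is all the hedged statement asserts.
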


\begin{proof}
Proposition~\ref{prop:universal-cyclic-automorphism} shows that \(T\)
preserves both \(\ker H\) and \(\rowspan(G)\), as well as Hamming weight.
If \(T^cv\) belonged to \(\rowspan(G)\), applying \(T^{-c}\) would put
\(v\) in that row space, a contradiction.  Thus every translate is a
logical representative.

Because \(T\) has prime order \(P\), the orbit of a support has size one or
\(P\).  A support fixed by a nonidentity shift is fixed by all shifts; within
each block column it must therefore be empty or contain all \(P\) lift
indices.  Its weight is then a multiple of \(P\).  A nonzero support of
weight below \(P\) cannot be fixed, so its orbit has size \(P\).  Distinct
binary representatives can differ by elements of \(\rowspan(G)\), which
gives the stated conclusion about their logical classes.
\end{proof}

\begin{remark}[Logical action of the cyclic automorphism]
Proposition~\ref{prop:universal-cyclic-automorphism} gives a physical
qubit automorphism that preserves the \(X\)- and \(Z\)-check spaces
separately.  Its logical action, and whether translated supports define
distinct logical classes, are determined by its induced action on a logical
basis.
\end{remark}

\begin{example}[A 53-element orbit of an observed weight-12 logical]
\label{ex:forbidden-weight12}
In a \((J,L,P)=(3,10,53)\) search aimed at eliminating logical operators
through weight 12, one rejected trial produced an \(X\)-type logical with
normalized pairs
\begin{align*}
 (\ell_t)_{t=1}^{12}
   &=(0,5,4,1,3,6,3,4,1,8,8,5),\\
 (r_t)_{t=1}^{12}
   &=(0,2,3,5,8,12,20,26,39,41,43,52).
\end{align*}
For the rejected trial, Eq.~\eqref{eq:forbidden-pattern-parity} vanished for
all three \(Z\)-check block rows, and direct row-space reduction confirmed
that the vector was not in \(\rowspan(H_X)\).  The search therefore stored
the two displayed lists and the side label \(X\).  For every later exponent
array \(D\), it forms the three parity tables \(q_{j,b}(D)\).  If all their
entries vanish, the same relative support, together with all 53 of its cyclic
translates, again has zero syndrome, so the candidate is rejected.  Every
retained candidate must instead have \(q_{j,b}(D)=1\) for at least one pair
\((j,b)\); the observed support then has nonzero syndrome and cannot reappear
as an \(X\)-type logical operator.

Since \(12<53\), Corollary~\ref{cor:cyclic-logical-orbit} shows that the
rejected lift actually contains 53 distinct weight-12 binary supports in this
cyclic orbit.  Equation~\eqref{eq:forbidden-pattern-parity} is invariant under
their common translation: for a later lift, either all 53 translates have
zero syndrome or none does.  One normalized bank entry therefore excludes
the whole orbit at once.

\end{example}

\subsection{Complete screening loop}

Stored patterns are used only to screen candidates; every reported distance
lower bound is recomputed on the retained fixed matrix by the complete search
of Section~\ref{sec:distance-certification-section}.  In
Algorithm~\ref{alg:candidate-screening}, the cutoff \(W\) is a design target.
The tests are ordered by cost: pairing graphs first, modular cycle tests and
stored patterns second, and binary rank, girth, and logical checks last.

\begin{algorithm}[H]
\small
\caption{PP code-candidate screening}
\label{alg:candidate-screening}
\begin{algorithmic}[1]
\Require \(J,L,P\), target girth \(g\in\{6,8\}\), search cutoff \(W\),
an exponent-search budget, and a forbidden-pattern bank \(B\)
\Ensure Fixed \(M,E,D,H_X,H_Z\) with verification records, or no candidate
within the budget
\State Generate an \(M\) satisfying the pairing-graph tests of
Section~\ref{sec:pair-partition-design}
\State Solve Eq.~\eqref{eq:matchingconstraint} over \(\F_P\)
\For{each exponent pair \((E,D)\) generated within the budget}
  \If{the signed CPM-exponent sum is zero for a cycle of length less than \(g\)}
    \State reject \((E,D)\)
  \ElsIf{a translated support recorded in \(B\) has zero syndrome for the current shifts}
    \State reject \((E,D)\)
  \Else
    \State build \(H_X,H_Z\) and verify CSS orthogonality, ranks, and girths
    \State run the direct logical search through weight \(W\) on both sides
    \If{a logical operator of weight at most \(W\) is found}
      \State store its relative support for possible use in later banks
      \State reject \((E,D)\)
    \Else
      \State return the fixed matrices and their verification records
    \EndIf
  \EndIf
\EndFor
\State report that no candidate was found within the prescribed budget
\end{algorithmic}
\end{algorithm}

Let \(\mathcal N\) be the exponent-search budget.  For each sampled exponent
pair, all 4-cycle tests cost
\(O(J^2L^2)\); a direct enumeration of 6-cycle types for target girth eight
is \(O(J^3L^3)\).  Building both sparse lifts writes \(O(JLP)\) nonzeros.
Thus the screening cost is at most \(\mathcal N\) times these per-candidate
tests plus binary rank and distance verification.  The latter can dominate:
its support tree is exponential in \(W\), although Proposition~
\ref{prop:p-independent-search-tree} removes \(P\) from that tree bound when
\(J,L,W\) are fixed.  Matrix preprocessing still depends on \(P\).

\section{Distance Verification}
\label{sec:distance-certification-section}

Algorithm~\ref{alg:bounded-logical-search} exhaustively searches
\(\ker H\setminus\rowspan G\) by branching from unsatisfied checks and using
proved lower bounds for pruning.  A completed \textsc{None} run certifies
\(d(H,G)>W\).  For CPM matrices, cyclic translation reduces the root set from
\(LP\) positions to \(L\).  For fixed \(J,L,W\), the resulting combinatorial
support-enumeration bound is independent of \(P\), although matrix
preprocessing and row-space membership tests can still grow with \(P\).  The
accompanying run records identify the fixed matrices and completed searches.

For a binary matrix \(H\), a nonzero vector \(v\) with \(Hv=0\) is a
zero-syndrome vector, and the classical minimum distance is the least weight
of such a vector.  Searches for small stopping sets and their codewords were
studied in \cite{richter2006,rosnes2009}; a search specialized to structured
CPM-LDPC codes was given in \cite{khatami2013}.  In the CSS setting one must
add the stabilizer quotient: an \(X\)-logical vector lies in \(\ker H_Z\)
but outside \(\rowspan H_X\), and a \(Z\)-logical vector is defined with
\(X\) and \(Z\) interchanged.

\subsection{Distance lower-bound verification}
\label{sec:distance-certification}
Let \(H\) and \(G\) be binary matrices with
\(\rowspan G\subseteq\ker H\), and define
\begin{equation}
 d(H,G)=\min\{\wt x:Hx=0,\ x\notin\rowspan G\}.
 \label{eq:quotient-distance}
\end{equation}
For an \(X\)-logical search use \((H,G)=(H_Z,H_X)\); for a \(Z\)-logical
search use \((H_X,H_Z)\).  Thus the quantum distance is the smaller of the
two values in Eq.~\eqref{eq:quotient-distance}.

Algorithm~\ref{alg:bounded-logical-search} searches all relevant supports up
to a weight limit \(W\).  A state consists of a selected support \(S\), a locally
forbidden set \(F\), and its syndrome \(\sigma(S)=H\mathbf1_S^{\mathsf T}\).
At a high level, it grows a connected candidate only where an unsatisfied
check says that another variable is necessary.  This avoids enumerating all
\(\binom{LP}{w}\) supports of every weight \(w\), while retaining a branch
leading to every minimum logical operator.
For an unsatisfied check \(c\), let
\[
 A_c(S,F)=N(c)\setminus(S\cup F)
\]
be its available variables, additionally restricted to indices larger than
the root when the minimum-root convention is used.  Let \(\Delta\) be the
maximum column weight of \(H\).  One valid lower bound on the number of
additional variables is
\[
 b_1=\left\lceil\frac{|\sigma(S)|}{\Delta}\right\rceil.
\]
Another, \(b_2\), is the size of any family of unsatisfied checks whose
sets \(A_c(S,F)\) are pairwise disjoint.
The implementation obtains \(b_2\) greedily and uses \(b=\max\{b_1,b_2\}\).

\begin{algorithm}[H]
\small
\caption{Complete search up to a weight limit}
\label{alg:bounded-logical-search}
\begin{algorithmic}[1]
\Require \(H,G\), weight limit \(W\), and roots \(R\)
\Ensure \(x\in\ker H\setminus\rowspan G\) with \(\wt x\le W\), or \textsc{None}
\For{each root \(r\in R\)}
  \State start depth-first search from \(S=\{r\}\) and \(F=\varnothing\)
  \If{a vector is returned} \State return it \EndIf
\EndFor
\State return \textsc{None}
\Statex
\State At a depth-first-search state \((S,F)\):
\If{\(\sigma(S)=0\)}
  \If{\(\mathbf1_S\notin\rowspan G\)} \State return \(\mathbf1_S\) \EndIf
  \State return \textsc{Fail} from this branch
\EndIf
\State choose an unsatisfied check \(c\) with minimum \(|A_c(S,F)|\)
\If{\(A_c(S,F)=\varnothing\) or \(|S|+b>W\)}
  \State return \textsc{Fail} from this branch
\EndIf
\State set \(F_{\mathrm{loc}}=F\)
\For{each \(v\in A_c(S,F)\) in a fixed order}
  \State search recursively from \((S\cup\{v\},F_{\mathrm{loc}})\)
  \If{a vector is returned} \State return it \EndIf
  \State \(F_{\mathrm{loc}}\gets F_{\mathrm{loc}}\cup\{v\}\)
\EndFor
\State return \textsc{Fail} from this branch
\end{algorithmic}
\end{algorithm}

For a generic matrix, one may take every variable as a possible minimum
root.  For a CPM matrix, label variables by
\((\ell,a)\in\mathbb Z_L\times\mathbb F_P\), ordered by \(\ell P+a\), and
take only \(R=\{(\ell,0):\ell\in\mathbb Z_L\}\).  The common cyclic
translation \((\ell,a)\mapsto(\ell,a+c)\) preserves both \(H\) and
\(\rowspan G\).  The implementation that records the complete run additionally requires prime
\(P\) and \(W<P\) before enabling this \(L\)-root reduction.

The next proposition bounds combinatorial support enumeration separately
from linear algebra on the lifted matrices.

\begin{proposition}[Lift-size-independent support enumeration]
\label{prop:p-independent-search-tree}
For a fully populated CPM matrix of row weight \(L\), fix \(J,L,W\), use the
\(L\) cyclic roots above, and stop at weight \(W\).  The number of
depth-first-search support states visited by
Algorithm~\ref{alg:bounded-logical-search} is at most
\begin{equation}
 L\sum_{t=0}^{W-1}(L-1)^t .
 \label{eq:p-independent-tree-bound}
\end{equation}
In particular, this combinatorial bound is independent of the lift size
\(P\).
\end{proposition}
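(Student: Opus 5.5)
We bound the search by counting nodes of the depth-first recursion tree, one tree for each root, and then sum over the \(L\) roots in \(R\).  Each search state is a pair \((S,F)\).  Branching only ever adds one variable to \(S\), so a state at depth \(t\) (root at depth \(0\)) has \(|S|=t+1\).

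First we bound the depth.  A state with \(|S|\ge W\) that still has \(\sigma(S)\neq0\) is cut off immediately: every valid lower bound satisfies \(b\ge1\) once some check is unsatisfied, so \(|S|+b>W\).  Such a state therefore has no children.  Consequently the only states that can have children satisfy \(|S|\le W-1\), i.e.\ they lie at depth at most \(W-2\).  Every child has \(|S|\le W\).  Hence all visited states lie at depths \(0,\dots,W-1\).  (A state with \(|S|=W\) and zero syndrome is a leaf that is tested once and returns.)

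Next we bound the branching factor, which is where full population and row weight \(L\) enter.  Children are created only from an unsatisfied check \(c\), and they are indexed by \(v\in A_c(S,F)=N(c)\setminus(S\cup F)\).  Since \(|N(c)|=L\), we get \(|A_c|\le L\).  The check \(c\) is unsatisfied and \(\sigma(S)\) is the syndrome of \(\mathbf 1_S\), so \(c\) is incident to an odd number of elements of \(S\).  In particular \(N(c)\cap S\neq\varnothing\), which gives \(|A_c(S,F)|\le L-1\).

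Putting the two bounds together, a root tree has at most \((L-1)^t\) states at depth \(t\).  Summing over \(t=0,\dots,W-1\) and multiplying by the \(|R|=L\) roots gives Eq.~\eqref{eq:p-independent-tree-bound}.  The local updates \(F_{\mathrm{loc}}\) only shrink the branching sets, so they can only decrease the count.  Neither the depth bound nor the branching bound involves \(P\).  The lift size enters only through the row-space membership tests at zero-syndrome leaves and through matrix preprocessing, and those costs are not counted as support states.

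The main point needing care is the depth accounting, specifically where the pruning test sits relative to the zero-syndrome test.  In Algorithm~\ref{alg:bounded-logical-search} the syndrome check comes first and the \(|S|+b>W\) test applies only to states with nonzero syndrome.  So we must verify that \(b\ge1\) whenever \(\sigma(S)\neq0\).  This holds because \(b\ge b_1=\lceil|\sigma(S)|/\Delta\rceil\ge1\).  With that in hand, no state with \(|S|>W\) is ever generated, and the counting above is exact.
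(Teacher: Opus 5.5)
Your proposal is correct and follows the paper's proof: the branching factor is at most \(L-1\) because the chosen unsatisfied check already contains an odd, hence nonzero, number of selected variables; there are at most \(W-1\) levels below the root; and the \(L\) cyclic roots contribute the outer factor. Your explicit observation that \(b\ge b_1=\lceil|\sigma(S)|/\Delta\rceil\ge1\) at every nonzero-syndrome state justifies the depth bound, which the paper's proof asserts without this argument.
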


\begin{proof}
Each search starts from one selected root.  At a nonzero-syndrome state, the
chosen unsatisfied check contains an odd, hence nonzero, number of selected
variables.  Since that check has row weight \(L\), at most \(L-1\) of its
variables remain available for branching.  Each recursive step adds one
variable, so there are at most \(W-1\) branching levels after the root.
Multiplying the geometric-tree bound by the \(L\) cyclic roots
gives~\eqref{eq:p-independent-tree-bound}.  Pruning can only reduce the
number of visited states.
\end{proof}

Thus this worst-case support-enumeration tree remains unchanged as \(P\)
increases with \(J,L,W\) fixed.  Constructing sparse matrices of length
\(LP\), computing ranks or row-echelon data, updating syndromes, and testing
membership in \(\rowspan G\), however, can depend on \(P\).  If \(W\) grows
with \(P\), then Eq.~\eqref{eq:p-independent-tree-bound} also inherits that
dependence through \(W\).

\begin{lemma}[Structure of a minimum logical support]
\label{lem:minimum-logical-support}
If \(d(H,G)<\infty\), there is a minimum-weight
\(x\in\ker H\setminus\rowspan G\) whose support is Tanner-connected.
Moreover, its support contains no proper nonempty support of a vector in
\(\ker H\).
\end{lemma}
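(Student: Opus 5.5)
Take any minimum-weight vector \(x\in\ker H\setminus\rowspan G\); one exists because \(d(H,G)<\infty\). Write \(S=\operatorname{supp}x\) and \(w=\wt x=d(H,G)\). The argument is a decomposition by exchange: whenever \(x\) can be split into two nonzero pieces that both have zero syndrome and disjoint supports, at least one piece is a logical operator of strictly smaller weight. That contradicts minimality, so no such split exists.

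Both claims follow from this one observation. Suppose \(x=y+z\) with \(y,z\in\ker H\) nonzero and \(\operatorname{supp}y\cap\operatorname{supp}z=\varnothing\). Then \(\wt y+\wt z=w\), so both \(\wt y<w\) and \(\wt z<w\). Since \(\rowspan G\) is a subspace and \(x\notin\rowspan G\), the vectors \(y\) and \(z\) cannot both lie in \(\rowspan G\). Whichever one lies outside is an element of \(\ker H\setminus\rowspan G\) of weight below \(d(H,G)\), which is impossible.

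For Tanner-connectedness, consider the subgraph of the Tanner graph induced by \(S\) together with all checks adjacent to \(S\). Suppose it has at least two components, and let \(S_1\) be the variable set of one of them, with \(S_2=S\setminus S_1\). Put \(y=\mathbf 1_{S_1}\) and \(z=\mathbf 1_{S_2}\). Every check \(c\) meets at most one of \(S_1,S_2\), because a check meeting both would join the two components. For a check meeting \(S_1\), we get \((Hy)_c=(Hx)_c=0\). For any other check, \((Hy)_c=0\) trivially. Hence \(y\in\ker H\), and then \(z=x+y\in\ker H\) as well. This is exactly the forbidden split, so the support \(S\) is Tanner-connected. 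Note that this holds for every minimum-weight logical \(x\), so in fact we get the stronger statement.

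For the second claim, suppose \(S\) contains a proper nonempty subset \(T\) with \(T=\operatorname{supp}y\) for some \(y\in\ker H\). Over \(\F_2\) we have \(y=\mathbf 1_T\), so \(z=x+y=\mathbf 1_{S\setminus T}\) is nonzero, lies in \(\ker H\), and has support disjoint from \(T\). Once again this is the forbidden split.

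The main point needing care is the subspace step: if both pieces lay in \(\rowspan G\), their sum \(x\) would too. That step, together with the binary identification \(y=\mathbf 1_T\), makes the argument go through. I would also note that "minimum" has to be read as minimum over \(\ker H\setminus\rowspan G\), not over \(\ker H\). Stabilizer pieces of smaller weight are allowed to exist; the argument excludes only splits into disjoint pieces. Accordingly, the statement forbids proper subsupports of any kernel vector, stabilizers included, and this is consistent because such a subsupport would force the complementary logical piece to be lighter.
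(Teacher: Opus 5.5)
Your proof is correct and takes essentially the paper's route. The paper likewise splits a disconnected support into zero-syndrome components that cannot all lie in \(\rowspan G\), and for a proper kernel subsupport \(y\) it uses the two lighter vectors \(y\) and \(x+y\), one of which must be logical. Merging both claims into one ``disjoint split'' observation, and splitting off a single component rather than all of them, is only a cosmetic reorganization.
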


\begin{proof}
Split the support of a minimum-weight \(x\) into its Tanner-connected
components.  Each component separately has zero syndrome, because no check
is adjacent to selected variables in two different components.  If every
component belonged to \(\rowspan G\), then their sum \(x\) would also belong
to \(\rowspan G\).  Hence one component is outside \(\rowspan G\), and
minimum weight forces it to be the whole support.

Now suppose that \(0\ne y\in\ker H\) and
\(\operatorname{supp}y\subsetneq\operatorname{supp}x\).  Then
\(x+y\in\ker H\), and both \(y\) and \(x+y\) have smaller weight than
\(x\).  They cannot both lie in \(\rowspan G\), because their sum is \(x\).
Thus one would be a lower-weight member of
\(\ker H\setminus\rowspan G\), a contradiction.
\end{proof}

\begin{theorem}[Correctness of the distance lower-bound search]
\label{thm:distance-search-correctness}
Let \(H,G\) be binary matrices with
\(\rowspan G\subseteq\ker H\), and let \(W\ge1\).  Assume that every
nonempty support of weight at most \(W\) has, under a coordinate symmetry
preserving both \(\ker H\) and \(\rowspan G\), a representative whose least
variable is in the root set \(R\).
Algorithm~\ref{alg:bounded-logical-search} returns \textsc{None} if and only
if \(d(H,G)>W\).  Consequently, running it on both CSS sides and obtaining
\textsc{None} proves quantum distance \(d>W\).

For the CPM matrices in this paper, the common cyclic translations give the
required coverage with
\(R=\{(\ell,0):\ell\in\mathbb Z_L\}\).  Hence only \(L\) root searches,
rather than \(LP\), are needed for a complete lower-bound certificate.
\end{theorem}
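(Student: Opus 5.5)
The proof has two directions. Soundness says that any returned vector witnesses \(d(H,G)\le W\). Completeness says that if \(d(H,G)\le W\), the search does not return \textsc{None}.

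\textbf{Soundness.} A vector is returned only at a state with \(\sigma(S)=0\) and \(\mathbf1_S\notin\rowspan G\). Every recursive call checks \(|S|+b\le W\) before branching and adds exactly one variable. Since \(b\ge1\) whenever \(\sigma(S)\ne0\), every visited support has weight at most \(W\). So the returned vector lies in \(\ker H\setminus\rowspan G\) with weight at most \(W\), and \(d(H,G)\le W\). Hence \(d(H,G)>W\) implies \textsc{None}.

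\textbf{Completeness.} Suppose \(d(H,G)\le W\). By Lemma~\ref{lem:minimum-logical-support}, choose a minimum-weight \(x\in\ker H\setminus\rowspan G\) with Tanner-connected support \(X\) containing no proper nonempty support of a kernel vector. By the coverage hypothesis, replace \(x\) by a symmetric image whose least variable is a root \(r\in R\). The symmetry preserves \(\ker H\), \(\rowspan G\), weight, and the two structural properties, so we may assume \(\min X=r\).

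Prove by induction the invariant that the search reaches a state \((S,F)\) with \(r\in S\subseteq X\), \(F\cap X=\varnothing\), and every element of \(F\) larger than \(r\) under the root convention. The base case is \((\{r\},\varnothing)\). Take such a state with \(S\ne X\). Then \(\sigma(S)\ne0\), since otherwise \(\mathbf1_S\) would be a kernel vector with proper support in \(X\). Let \(c\) be the chosen unsatisfied check. Because \(Hx=0\), \(c\) has an even number of neighbours in \(X\) but an odd number in \(S\). So \(N(c)\cap(X\setminus S)\neq\varnothing\), and these variables exceed \(r\) and avoid \(F\), so \(A_c(S,F)\ne\varnothing\).

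Next, the pruning bound is valid: \(|S|+b\le|X|\le W\). For \(b_1\), each variable of \(X\setminus S\) clears at most \(\Delta\) unsatisfied checks. For \(b_2\), each check in the disjoint family needs its own variable of \(X\setminus S\) inside its available set. Now let \(v\) be the first element of \(A_c(S,F)\cap X\) in the fixed order. All variables tried before \(v\) are outside \(X\), so the branch for \(v\) is called with \(F_{\mathrm{loc}}\cap X=\varnothing\), preserving the invariant, provided no earlier branch already returned a vector. If an earlier branch did return, we are done by soundness. Since \(|S|\) increases along this path, it eventually reaches \(S=X\). There \(\sigma(S)=0\) and \(\mathbf1_X\notin\rowspan G\), so \(x\) is returned.

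Running the algorithm on \((H_Z,H_X)\) and \((H_X,H_Z)\) then gives the quantum statement.

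\textbf{CPM root set.} Proposition~\ref{prop:universal-cyclic-automorphism} and Corollary~\ref{cor:cyclic-logical-orbit} show that \(T^c\) preserves \(\ker H\) and \(\rowspan G\). Given a support whose least element is \((\ell,a)\) under the order \(\ell P+a\), apply \(T^{-a}\). The block-\(\ell\) element becomes \((\ell,0)\), and elements in earlier block columns stay in earlier block columns. Within block \(\ell\), indices are shifted mod \(P\), so some index may wrap but never below \(0\). Thus the least element is now \((\ell',b)\) with \(\ell'\le\ell\), and since block \(\ell\) now contains \((\ell,0)\)... I need \(\ell'\) to land at index \(0\). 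The fix is to choose \(\ell\) as the smallest occupied block column and shift by the least lift index occurring in that block. Then the new least element is \((\ell,0)\in R\).

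The main obstacle is the \(F\)-invariant. The minimum-root restriction and the locally forbidden sets must never exclude a variable of \(X\). The ordering argument above handles this, but it must also cover the case where \(A_c\) is further restricted to indices larger than the root, which holds because \(r=\min X\).
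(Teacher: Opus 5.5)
Your proof is correct and follows the paper's argument essentially step for step. You derive soundness from the return condition. For completeness you carry a state with \(S\subseteq\operatorname{supp}x\), \(F\cap\operatorname{supp}x=\varnothing\) along the first support variable in the ordered loop, using Lemma~\ref{lem:minimum-logical-support} for the nonzero syndrome of proper subsets and the validity of \(b_1,b_2\) to rule out pruning. CPM coverage comes from translating a variable of the least occupied block column to lift index \(0\). You are slightly more explicit than the paper about the weight bound in soundness and the root-order restriction on \(A_c\).

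The false start in your CPM paragraph is unnecessary, though your final version is right. Since \((\ell,a)\) is the least element, no earlier block column is occupied and every lift index in block \(\ell\) is at least \(a\). So \(T^{-a}\) already yields least element \((\ell,0)\) with no wraparound.
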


\begin{proof}
Every returned vector has zero syndrome, lies outside \(\rowspan G\), and has
weight at most \(W\), so it is valid.

To show that no valid vector is missed, suppose \(d(H,G)\le W\), and choose
the minimum-weight representative \(x\) from
Lemma~\ref{lem:minimum-logical-support}.

Choose a symmetry-equivalent support whose root is the minimum variable.  In the
CPM case, take the least block-column index \(\ell_0\) in the support and
translate one selected \((\ell_0,a_0)\) to \((\ell_0,0)\).  This variable is
then the minimum root, proving that the \(L\) CPM roots cover \(x\).

Consider a search state with \(S\subseteq\operatorname{supp}x\) and
\(F\cap\operatorname{supp}x=\varnothing\).  If \(S\) is proper, its syndrome
cannot be zero by the preceding minimality argument.  Let \(c\) be the
unsatisfied check selected by the algorithm.  The parity of
\(\operatorname{supp}x\) at \(c\) is even and that of \(S\) is odd, so at
least one variable of \(\operatorname{supp}x\setminus S\) lies in
\(A_c(S,F)\).  In the ordered loop, the first such variable is explored
before it can enter \(F_{\mathrm{loc}}\), giving another state with
\(S\subseteq\operatorname{supp}x\) and
\(F\cap\operatorname{supp}x=\varnothing\).

Such a state is never stopped early because \(x\) extends \(S\) to a valid
vector.  Each future variable can repair at most \(\Delta\)
currently unsatisfied checks, proving \(b_1\); checks with disjoint available
sets require distinct future variables, proving \(b_2\).  Hence
\(|S|+b\le\wt x\le W\).  Repeating the argument reaches the full support of
\(x\), where the algorithm returns it.  This contradicts \textsc{None} and
proves completeness.
\end{proof}

A completed \textsc{None} result is a certificate of exhaustive support
enumeration through \(W\).  Its wall-clock cost includes the \(P\)-dependent
matrix operations described after Proposition~\ref{prop:p-independent-search-tree}.

For every fully populated CPM block matrix here, the \(P\) rows in any one
block-row family sum to the all-one vector.  Hence \(Hv=0\) implies
\begin{equation}
 0=\mathbf1^{\mathsf T}v=\wt(v)\pmod2.
 \label{eq:even-zero-syndrome-weight}
\end{equation}
All kernel vectors therefore have even weight.  For an even target \(d_0\),
two completed searches through \(W=d_0-2\) prove \(d\ge d_0\).  A separately
verified non-stabilizer zero-syndrome vector of weight \(d_0\) supplies the
matching upper bound and proves equality.

The lower-bound JSON identifies the input matrix and records the weight limit,
result for each root, and completion status.  It is a reproducible run record: independent
verification reruns the proved algorithm on the identified input.  In
contrast, each reported upper-bound vector is checked directly for zero
syndrome and nonmembership in the opposite stabilizer row space.

\subsection{Factorial upper bound and explicit witnesses}

The complete search gives instance-specific lower bounds.  Independently,
the fact that every block is a CPM gives a universal upper bound: maximal minors of either
polynomial parity-check matrix produce low-weight kernel vectors.  A counting
argument ensures that at least one of them is logical rather than a
stabilizer.

Let \(R_P=\F_2[x]/(x^P-1)\), and represent the CPM matrices by

\[
 \widehat H_X(x)=(x^{e_{j\ell}})_{j,\ell},\qquad
 \widehat H_Z(x)=(x^{d_{j\ell}})_{j,\ell}
 \in R_P^{J\times L}.
\]

For a set \(S\subseteq\Z_L\) of \(J+1\) block columns, define
\(c_Z^S(x)\in R_P^L\) by
\begin{equation}
 (c_Z^S(x))_\ell=
 \begin{cases}
 \operatorname{perm}\!\left(\widehat H_Z(x)_{[:,S\setminus\{\ell\}]}\right),
   & \ell\in S,\\
 0,&\ell\notin S.
 \end{cases}
 \label{eq:permanent-vector}
\end{equation}
Over characteristic two, the usual maximal-minor identity gives
\(\widehat H_Z(x)c_Z^S(x)^{\mathsf T}=0\).  Each of its \(J+1\) nonzero
coordinates contains at most \(J!\) polynomial terms, so its binary Hamming weight
is at most \((J+1)!\) \cite{smarandache2012}.

The two assumptions in the theorem below are natural for the CPM family
considered here.  First, the elementary parameter count gives \(LP\) qubits
and \(2JP\) check rows.  Requiring \(LP-2JP>0\) gives \(L>2J\), or,
for integer \(L\), \(L\ge2J+1\).  This places the construction in the
positive-rate parameter range before rank deficiencies are counted.

Second, each fully populated \(J\times L\) CPM block matrix has \(JP\) rows.
Within any one block-row family, the sum of its \(P\) rows is the all-one
vector of length \(LP\), because every block is a permutation matrix.
The \(J\) block-row families therefore give \(J-1\) unavoidable independent
row dependencies, and hence
\[
 \rank_{\F_2}H_X,\ \rank_{\F_2}H_Z\le JP-(J-1)=J(P-1)+1.
\]
Equality means that there are no additional row dependencies beyond these
CPM-forced ones.  It is therefore the natural maximum-rank condition, and it
is directly verified for every code in Table~\ref{tab:finite-summary}.

\begin{theorem}[Factorial upper bound on quantum distance]
\label{thm:distance-upper-bound}
Let \(H_X,H_Z\) be the CPM matrices of a PP code with \(J\ge2\), odd \(P\),
and
\[
 L\ge2J+1,\qquad
 \rank_{\F_2}H_X=\rank_{\F_2}H_Z=J(P-1)+1.
\]
Then there exist
\[
 v_X\in\ker H_Z\setminus\rowspan H_X,\qquad
 v_Z\in\ker H_X\setminus\rowspan H_Z
\]
with
\[
 \wt v_X\le(J+1)!,\qquad \wt v_Z\le(J+1)!.
\]
Consequently,
\[
 d_X\le(J+1)!,\qquad d_Z\le(J+1)!,\qquad d\le(J+1)!.
\]
\end{theorem}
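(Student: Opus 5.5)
We argue for \(v_X\); the \(Z\)-side is identical with \(E\) and \(D\) interchanged. For every \((J+1)\)-subset \(S\subseteq\Z_L\), the permanent vector \(c_Z^S(x)\) of Eq.~\eqref{eq:permanent-vector} satisfies \(\widehat H_Z(x)c_Z^S(x)^{\mathsf T}=0\) in \(R_P\). Hence, for every \(a\in\F_P\), the binary vector of \(x^a c_Z^S(x)\) lies in \(\ker H_Z\) and has weight at most \((J+1)!\). This is the maximal-minor identity in characteristic two, cited above. The plan is a counting argument. We show that the \(R_P\)-span (equivalently, the \(\F_2\)-span of all cyclic translates) of these permanent vectors is too large to fit inside \(\rowspan H_X\). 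Then at least one generator \(x^a c_Z^S(x)\) lies outside \(\rowspan H_X\), and that generator is the required \(v_X\).

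First we compute dimensions. The rank hypothesis gives
\[
 \dim\ker H_Z=LP-J(P-1)-1,\qquad \dim\rowspan H_X=J(P-1)+1 .
\]
Since \(L\ge2J+1\), we have \(\dim\ker H_Z-\dim\rowspan H_X=(L-2J)P+2J-2-P\cdot0\ge P+2J-2>0\). It therefore suffices to show that the span \(U\) of the permanent vectors (with translates) has dimension larger than \(J(P-1)+1\). The cleanest route is to show that \(U=\ker H_Z\). We will work over \(R_P\cong\F_2\oplus\bigoplus_i\F_{2^{m}}\)-components, using the CRT decomposition for odd \(P\), where \(x^P-1\) is squarefree. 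The odd-\(P\) hypothesis enters exactly here. On each component field \(K\), the matrix \(\widehat H_Z(\zeta)\) is a \(J\times L\) matrix over \(K\). The rank hypothesis, read componentwise, says the following. At \(\zeta=1\) the matrix is all-one and has rank \(1\); this accounts for the \(+1\) and the \(J-1\) forced dependencies. At every nontrivial \(P\)-th root of unity \(\zeta\), the rank is exactly \(J\); this accounts for the \(J(P-1)\). Thus the total rank is \(\sum_\zeta \rank\widehat H_Z(\zeta)\) counted with component degrees, and equality forces full rank \(J\) at every \(\zeta\neq1\).

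For a \(J\times L\) matrix of full rank \(J\) over a field, the generalized cross products (signed maximal minors, which are permanents in characteristic two) taken over all \((J+1)\)-column subsets span the whole kernel, of dimension \(L-J\). This is standard Cramer/exterior-algebra linear algebra. Hence at every \(\zeta\neq1\), the evaluations \(c_Z^S(\zeta)\) span \(\ker\widehat H_Z(\zeta)\). Consequently the \(R_P\)-span \(U\) contains the full kernel components away from \(\zeta=1\), giving
\[
 \dim U\ge (L-J)(P-1)>J(P-1)+1 ,
\]
where the strict inequality uses \(L\ge2J+1\) (so \((L-2J)(P-1)\ge P-1\ge2>1\)). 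Since \(U\subseteq\ker H_Z\) and \(\dim U>\dim\rowspan H_X\), not every generator lies in \(\rowspan H_X\). We choose one that does not and call it \(v_X\). It has \(\wt v_X\le(J+1)!\) and is nonzero, and the bounds on \(d_X\), \(d_Z\) and \(d\) follow.

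The main obstacle is justifying the passage from the binary rank condition to rank \(J\) at every nontrivial root of unity. We handle it by the CRT isomorphism \(\F_2^{P}\otimes\cdots\): the binary rank of the block-circulant matrix equals \(\sum_{\text{components}}\deg(K)\cdot\rank_K\widehat H_Z\). The \(\zeta=1\) component contributes exactly \(1\). Since each other component contributes at most \(J\deg K\) and the total must equal \(J(P-1)+1\), every term must be maximal. A secondary point is that the span of the cross products is computed per component field, so the identification of \(U\) with its CRT components must be checked; this is routine because \(U\) is an \(R_P\)-submodule.
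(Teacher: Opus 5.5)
Your proposal is correct and follows essentially the same route as the paper. The CRT decomposition forces rank $J$ at every nontrivial component, and the permanent (maximal-minor) vectors span the kernel there, so their translate-span $U$ has $\dim U\ge(L-J)(P-1)>J(P-1)+1=\rank H_X$; hence some generator of weight at most $(J+1)!$ lies outside $\rowspan H_X$. You allow that generator to be a translate, whereas the paper uses shift-invariance of $\rowspan H_X$ to get an unshifted one, and either suffices.

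Drop the aside that $U=\ker H_Z$: it is false for $J\ge2$. At $x=1$ every permanent equals $J!=0$ in $\F_2$, so $U$ has zero component there, while $\ker H_Z$ has an $(L-1)$-dimensional one. Your argument only uses the lower bound on $\dim U$, so nothing breaks.
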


\begin{proof}
At \(x=1\), both polynomial parity-check matrices become the all-one
\(J\times L\) matrix and have rank one.  In the Chinese-remainder
decomposition of \(R_P\), every nontrivial component has row rank at most
\(J\).  Since \(J(P-1)+1\) is the largest possible total binary rank, the
assumed equality forces \(\widehat H_X(x)\) and \(\widehat H_Z(x)\) to have
row rank \(J\) in every nontrivial component.

For every \((J+1)\)-subset \(S\) of block columns, form the maximal-minor
vector \(c_Z^S(x)\) in Eq.~\eqref{eq:permanent-vector}.  Include all its
common cyclic shifts, and let \(\mathcal C_Z\) be their binary span.
Over the field associated with any nontrivial irreducible factor,
\(\widehat H_Z(x)\) has row rank \(J\).  Standard linear algebra then shows
that its maximal-minor circuit vectors span its kernel, whose dimension is
\(L-J\).  At the factor \(x=1\), every polynomial term evaluates to one and every
\(J\times J\) permanent evaluates to \(J!=0\) in \(\F_2\), since \(J\ge2\).
The Chinese-remainder decomposition therefore gives
\[
 \dim_{\F_2}\mathcal C_Z=(L-J)(P-1).
\]
Thus \(\mathcal C_Z\) contains \(2^{(L-J)(P-1)}\) vectors, whereas
\(\rowspan H_X\) contains \(2^{\rank H_X}\) vectors.  Moreover,
\[
 (L-J)(P-1)\ge(J+1)(P-1)>J(P-1)+1=\rank H_X,
\]
where the strict inequality uses \(P\ge3\).  Hence
\(\mathcal C_Z\not\subseteq\rowspan H_X\).

Simultaneously shifting the \(P\) lift indices within every block column
permutes the rows within each CPM block-row family.  Therefore
\(\rowspan H_X\) is invariant under this shift.  If every permanent
generator \(c_Z^S(x)\) were in \(\rowspan H_X\), then all its shifts and
their entire span would also be in that row space, a contradiction.  Hence
at least one permanent generator is a nonzero element of
\(\ker H_Z\setminus\rowspan H_X\), and its weight is at most \((J+1)!\).
This gives \(v_X\).  Interchanging \(X\) and \(Z\) gives \(v_Z\).
\end{proof}

\begin{corollary}[Exact distance from matching bounds]
\label{cor:matching-distance-bounds}
If the complete search proves \(d_X>w_X-1\) and an \(X\)-logical vector of
weight \(w_X\) is given, then \(d_X=w_X\).  The analogous statement holds
for \(d_Z\).  Hence, if matching lower and upper bounds are obtained on both
sides, then \(d=\min\{w_X,w_Z\}\).
\end{corollary}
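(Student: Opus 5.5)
The plan is to combine the two directions of the definition of \(d_X\) in Eq.~\eqref{eq:quotient-distance} with \((H,G)=(H_Z,H_X)\). First, take the lower bound. By Theorem~\ref{thm:distance-search-correctness}, a completed \textsc{None} run of Algorithm~\ref{alg:bounded-logical-search} through weight \(W=w_X-1\) is equivalent to \(d_X>w_X-1\). Since distances are integers, this gives \(d_X\ge w_X\). If \(w_X\) is even, the even-weight property \eqref{eq:even-zero-syndrome-weight} lets the search stop at \(w_X-2\); I will note this but not depend on it.

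Second, take the upper bound. The given vector \(v\) is checked to satisfy \(H_Zv=0\) and \(v\notin\rowspan H_X\) with \(\wt v=w_X\). It is therefore admissible in the minimum defining \(d_X\), so \(d_X\le w_X\). Combining the two bounds gives \(d_X=w_X\).

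The \(Z\)-side argument is identical with \((H,G)=(H_X,H_Z)\), using the symmetric hypotheses, and it gives \(d_Z=w_Z\). Since the quantum distance of a CSS code is \(d=\min\{d_X,d_Z\}\), as stated after Eq.~\eqref{eq:quotient-distance}, we get \(d=\min\{w_X,w_Z\}\).

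There is no real obstacle here. The only point needing care is that the witness really is non-stabilizer and not merely zero-syndrome: a stabilizer vector would give no upper bound on \(d_X\). The hypothesis phrase ``\(X\)-logical vector'' will therefore be read explicitly as \(v\in\ker H_Z\setminus\rowspan H_X\).
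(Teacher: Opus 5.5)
Your proof is correct and matches the paper's argument, which is left implicit as immediate from Eq.~\eqref{eq:quotient-distance}. The search gives \(d_X\ge w_X\) by integrality, the witness in \(\ker H_Z\setminus\rowspan H_X\) gives \(d_X\le w_X\), and \(d=\min\{d_X,d_Z\}\). Your explicit reading of ``\(X\)-logical'' as nonmembership in the opposite row space is exactly the property the paper checks directly for its upper-bound witnesses.
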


Because different terms in a permanent may cancel, the theorem gives the upper
bound \((J+1)!\), rather than an equality at that weight.
All codes in Table~\ref{tab:finite-summary} satisfy the rank and size
conditions.  It gives the universal bound \(24\) for \(J=3\) and \(120\)
for \(J=4\).  The sharper exact upper bounds in the table come from the
accompanying explicit vectors.  Their zero syndrome and nonmembership in the
opposite stabilizer row space are checked directly.  In particular,
weight-20 vectors on both CSS sides provide the upper bound for
\(\qcode{944}{478}{20}\).

\section{Numerical Results}
\label{sec:numerical-results}

Table~\ref{tab:finite-summary} lists 36 independently certified CPM-based CSS
codes and three daggered zero-design-rate examples with contributor-reported
distances.  For each unmarked entry, binary ranks, Tanner girths, complete
lower bounds, and matching logical witnesses were checked.
The three smallest distance-12 codes were provided by Nishad Maskara and
independently checked here.

Jong Yeon Lee supplied 19 additional CPM-PP instances.  These instances were
constructed using the logical-spectroscopy method introduced in
\cite{lee2026LogicalSpectroscopy}.  The daggered \((J,L)=(5,10)\) rows are
\(\qcode{110}{8}{12}\) (three realizations), \(\qcode{170}{8}{20}\), and
\(\qcode{190}{8}{18}\): although \(1-2J/L=0\), rank dependencies leave
\(k=8\).  Their structural parameters were independently checked, while the
exact distances remain contributor reported.  The further reported \(d>L\)
codes \(\qcode{184}{50}{10}\), \(\qcode{488}{126}{16}\), and
\(\qcode{904}{230}{20}\), and the girth-eight realizations
\(\qcode{248}{66}{12}\), \(\qcode{296}{78}{12}\), are recorded on the
project webpage with their available verification data.  Table~\ref{tab:finite-summary}
distinguishes independently certified distances from contributor-reported
ones by the dagger notation.  For
\(\qcode{740}{378}{d}\),
verified weight-20 logical operators and the contributor-reported lower bound
give \(16\leq d\leq20\).  The webpage contains the full code data and
verification records \cite{okadaKasaiData2026}; the accompanying software is
available in the GitHub repository \cite{okadaKasaiCode2026}.

\begin{table}[H]
\centering
\scriptsize
\setlength{\tabcolsep}{3.2pt}
\renewcommand{\arraystretch}{0.88}
\caption{CPM-based CSS codes.  Daggers mark contributor-reported distances.}
\label{tab:finite-summary}
\begin{tabular}{@{}ccccc@{}}
\toprule
\(\qcode{n}{k}{d}\) & girth & \((J,L)\) & \(P\) & Rate \(k/n\)\\
\midrule
\(\qcode{232}{62}{12}\) & \(6\) & \((3,8)\) & \(29\) & \(0.267\)\\
\(\qcode{248}{66}{12}\) & \(6\) & \((3,8)\) & \(31\) & \(0.266\)\\
\(\qcode{296}{78}{12}\) & \(6\) & \((3,8)\) & \(37\) & \(0.264\)\\
\(\qcode{472}{122}{14}\) & \(6\) & \((3,8)\) & \(59\) & \(0.258\)\\
\(\qcode{472}{122}{16}\) & \(8\) & \((3,8)\) & \(59\) & \(0.258\)\\
\(\qcode{488}{126}{14}\) & \(6\) & \((3,8)\) & \(61\) & \(0.258\)\\
\(\qcode{584}{150}{18}\) & \(8\) & \((3,8)\) & \(73\) & \(0.257\)\\
\(\qcode{1112}{282}{20}\) & \(8\) & \((3,8)\) & \(139\) & \(0.254\)\\
\(\qcode{1336}{338}{22}\) & \(8\) & \((3,8)\) & \(167\) & \(0.253\)\\
\(\qcode{1784}{450}{24}\) & \(8\) & \((3,8)\) & \(223\) & \(0.252\)\\
\(\qcode{470}{192}{12}\) & \(6\) & \((3,10)\) & \(47\) & \(0.409\)\\
\(\qcode{530}{216}{12}\) & \(6\) & \((3,10)\) & \(53\) & \(0.408\)\\
\(\qcode{590}{240}{12}\) & \(6\) & \((3,10)\) & \(59\) & \(0.407\)\\
\(\qcode{710}{288}{14}\) & \(6\) & \((3,10)\) & \(71\) & \(0.406\)\\
\(\qcode{970}{392}{16}\) & \(6\) & \((3,10)\) & \(97\) & \(0.404\)\\
\(\qcode{1390}{560}{18}\) & \(6\) & \((3,10)\) & \(139\) & \(0.403\)\\
\(\qcode{1490}{600}{18}\) & \(6\) & \((3,10)\) & \(149\) & \(0.403\)\\
\(\qcode{1630}{656}{20}\) & \(8\) & \((3,10)\) & \(163\) & \(0.402\)\\
\(\qcode{2110}{848}{22}\) & \(6\) & \((3,10)\) & \(211\) & \(0.402\)\\
\(\qcode{2230}{896}{24}\) & \(8\) & \((3,10)\) & \(223\) & \(0.402\)\\
\(\qcode{1356}{682}{14}\) & \(6\) & \((3,12)\) & \(113\) & \(0.503\)\\
\(\qcode{1524}{766}{14}\) & \(6\) & \((3,12)\) & \(127\) & \(0.503\)\\
\(\qcode{2004}{1006}{16}\) & \(6\) & \((3,12)\) & \(167\) & \(0.502\)\\
\(\qcode{2676}{1342}{18}\) & \(6\) & \((3,12)\) & \(223\) & \(0.502\)\\
\(\qcode{1414}{812}{12}\) & \(6\) & \((3,14)\) & \(101\) & \(0.574\)\\
\(\qcode{3122}{1788}{16}\) & \(6\) & \((3,14)\) & \(223\) & \(0.573\)\\
\midrule
\(\qcode{276}{98}{14}\) & \(6\) & \((4,12)\) & \(23\) & \(0.355\)\\
\(\qcode{372}{130}{16}\) & \(6\) & \((4,12)\) & \(31\) & \(0.349\)\\
\(\qcode{444}{154}{18}\) & \(6\) & \((4,12)\) & \(37\) & \(0.347\)\\
\(\qcode{492}{170}{20}\) & \(6\) & \((4,12)\) & \(41\) & \(0.346\)\\
\(\qcode{516}{178}{20}\) & \(6\) & \((4,12)\) & \(43\) & \(0.345\)\\
\(\qcode{708}{242}{22}\) & \(6\) & \((4,12)\) & \(59\) & \(0.342\)\\
\(\qcode{518}{228}{16}\) & \(6\) & \((4,14)\) & \(37\) & \(0.440\)\\
\(\qcode{574}{252}{18}\) & \(6\) & \((4,14)\) & \(41\) & \(0.439\)\\
\(\qcode{848}{430}{18}\) & \(6\) & \((4,16)\) & \(53\) & \(0.507\)\\
\(\qcode{944}{478}{20}\) & \(6\) & \((4,16)\) & \(59\) & \(0.506\)\\
\midrule
\multicolumn{5}{c}{Zero-design-rate contributed examples}\\
\(\qcode{110}{8}{12}^{\dagger}\) & \(6\) & \((5,10)\) & \(11\) & \(0.073\)\\
\(\qcode{170}{8}{20}^{\dagger}\) & \(6\) & \((5,10)\) & \(17\) & \(0.047\)\\
\(\qcode{190}{8}{18}^{\dagger}\) & \(6\) & \((5,10)\) & \(19\) & \(0.042\)\\
\bottomrule
\end{tabular}
\end{table}

\section{Extension to APM-PP Codes}
\label{sec:apm-pp-extension}

CPM exponents make construction, cycle screening, and distance verification
especially transparent, but CSS cancellation only uses equality of relative
maps.  Pairing equal relative affine maps therefore extends the construction
to affine permutation matrices (APMs).  A structured specialization below
reproduces the selected check matrices, called the active check matrices in
the Kasai construction.

Let \(h(x)=ax+b\) be an affine permutation of \(\Z_P\), where
\(a\in\Z_P^\times\), and write \(A(h)\) for its permutation matrix.  In
parallel with the CPM exponent arrays \(E\) and \(D\) in Section~2, choose
affine maps \(e_{i\ell}\) and \(d_{j\ell}\), and set
\begin{equation}
 H_X=\bigl(A(e_{i\ell})\bigr)_{i,\ell},\qquad
 H_Z=\bigl(A(d_{j\ell})\bigr)_{j,\ell}.
 \label{eq:apm-blockmatrices}
\end{equation}
For a pair \((i,j)\) of an \(X\)-block row and a \(Z\)-block row, define
the relative map
\begin{equation}
 \rho_{ij\ell}=d_{j\ell}^{-1}\circ e_{i\ell}.
 \label{eq:apm-relative-map}
\end{equation}

\begin{definition}[APM-PP presentation]
An APM-PP presentation consists of the two affine-map arrays and, for every
block-row pair \((i,j)\), a pair partition \(M_{ij}\) of the \(L\) block
columns such that
\begin{equation}
 \rho_{iju}=\rho_{ijv}
 \quad\text{for every }\{u,v\}\in M_{ij}.
 \label{eq:apm-pp-condition}
\end{equation}
\end{definition}
When all affine multipliers are one, this condition reduces, up to the CPM
sign convention, to the paired-difference condition of Section~2.

\begin{theorem}[APM-PP construction]
\label{thm:apm-pp-construction}
Every APM-PP presentation satisfies \(H_XH_Z^{\mathsf T}=0\).  If both
arrays have \(J\) block rows, the resulting regular CSS code has length
\(LP\), row weight \(L\), column weight \(J\), and dimension
\[
 k=LP-\rank H_X-\rank H_Z.
\]
\end{theorem}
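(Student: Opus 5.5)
The argument will mirror Lemma~\ref{lem:cssorthogonality} and Theorem~\ref{thm:construction}, replacing exponent arithmetic by composition of affine permutations.

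First fix conventions: let \(A(h)\) be the \(P\times P\) matrix with \(A(h)_{h(x),x}=1\), so that \(A(g)A(h)=A(g\circ h)\) and \(A(h)^{\mathsf T}=A(h^{-1})\). Any other fixed convention only swaps the order of composition, and I would check that the definition \(\rho_{ij\ell}=d_{j\ell}^{-1}\circ e_{i\ell}\) matches it. If the paper's convention instead gives \(A(e)A(d)^{\mathsf T}=A(e\circ d^{-1})\), the conclusion still holds. The reason is that \(e_u\circ d_u^{-1}=d_u\circ(d_u^{-1}\circ e_u)\circ d_u^{-1}\) need not equal the corresponding \(v\)-term in general, so I would fix the convention in which \(A(e)A(d)^{\mathsf T}\) is expressed through \(d^{-1}\circ e\). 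Concretely, use \(A(h)_{x,h(x)}=1\); then \(A(g)A(h)=A(h\circ g)\) and \(A(e)A(d)^{\mathsf T}=A(e)A(d^{-1})=A(d^{-1}\circ e)\). This is the main technical point and the only place where care is needed.

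With this in hand, fix an \(X\)-block row \(i\) and a \(Z\)-block row \(j\). The \((i,j)\) block of \(H_XH_Z^{\mathsf T}\) over \(\F_2\) is
\[
 \sum_{\ell=0}^{L-1}A(e_{i\ell})A(d_{j\ell})^{\mathsf T}=\sum_{\ell=0}^{L-1}A(\rho_{ij\ell}).
\]
Group the summands by the pairs of \(M_{ij}\). By \eqref{eq:apm-pp-condition}, the two terms of each pair \(\{u,v\}\) are the same permutation matrix and cancel mod 2. Since \(M_{ij}\) partitions all \(L\) block columns, every block vanishes, so \(H_XH_Z^{\mathsf T}=0\).

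The parameters follow as in Theorem~\ref{thm:construction}. Since \(a\in\Z_P^\times\), each \(h(x)=ax+b\) is a bijection of \(\Z_P\), so each block is a permutation matrix. Hence every lifted row meets each of the \(L\) block columns exactly once, and every lifted column meets each of the \(J\) block rows exactly once. This gives length \(LP\), row weight \(L\), and column weight \(J\). Finally, \(k=LP-\rank H_X-\rank H_Z\) is the standard CSS dimension formula, valid once orthogonality holds.
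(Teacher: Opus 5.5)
Your proof is correct and is essentially the paper's argument. The paper phrases orthogonality as ``the \(X\)-row \(r\) of block row \(i\) and the \(Z\)-row \(s\) of block row \(j\) meet in block column \(\ell\) iff \(s=\rho_{ij\ell}(r)\)''; this is the entrywise form of your identity \(A(e_{i\ell})A(d_{j\ell})^{\mathsf T}=A(\rho_{ij\ell})\) under the convention \(A(h)_{x,h(x)}=1\) that you finally adopt, which is also the convention matching the CPM definition of \(C(s)\) in Section~2.

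You should delete the remark that the conclusion ``still holds'' when \(A(e)A(d)^{\mathsf T}=A(e\circ d^{-1})\). As your next sentence itself shows, \(e_u\circ d_u^{-1}=d_u\circ\rho\circ d_u^{-1}\) and \(e_v\circ d_v^{-1}=d_v\circ\rho\circ d_v^{-1}\) are in general different conjugates in the non-abelian affine group, so the choice of convention genuinely matters.
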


\begin{proof}
An \(X\)-check row with lift index \(r\) in block row \(i\) and a
\(Z\)-check row with lift index \(s\) in block row \(j\) meet in block
column \(\ell\) if and only if \(s=\rho_{ij\ell}(r)\).
Equation~\eqref{eq:apm-pp-condition} makes this condition simultaneously
true or false at the two endpoints of every pair.  Hence every \(X\)--\(Z\)
row intersection is even.  The other claims follow because every nonzero block
is a permutation matrix.
\end{proof}

The Kasai construction starts from \(L=2q\) block columns, divided into two
halves, and two lists of \(P\times P\) affine permutation matrices
\(F_u,G_u\), \(u\in\Z_q\).
Cyclically shifting their indices gives the \(q\)-block-row parent matrices
\begin{equation}
 \widehat H_X=
 \left(\,(F_{t-i})_{i,t\in\Z_q}\ \middle|\
             (G_{t-i})_{i,t\in\Z_q}\,\right),\qquad
 \widehat H_Z=
 \left(\,(G_{j-t}^{\mathsf T})_{j,t\in\Z_q}\ \middle|\
             (F_{j-t}^{\mathsf T})_{j,t\in\Z_q}\,\right).
 \label{eq:kasai-parent-matrices}
\end{equation}
Choose \(S\subseteq\Z_q\), \(|S|=J\), and let \(H_X\) and \(H_Z\) be the
block-row submatrices of \(\widehat H_X\) and \(\widehat H_Z\), respectively,
with row indices in \(S\).  The Kasai terminology calls these selected rows
active and the complementary parent rows latent; below we use selected rows
and complementary parent rows.  The set
\begin{equation}
 \Delta_S=\{j-i\bmod q:i,j\in S\}
 \label{eq:kasai-selected-differences}
\end{equation}
contains exactly the row differences occurring between selected block rows.

Equivalently, the blocks of the selected matrices are
\begin{align}
 (H_X)_{i,t}&=F_{t-i},& (H_X)_{i,q+t}&=G_{t-i},\label{eq:kasai-apm-x-blocks}\\
 (H_Z)_{j,t}&=G_{j-t}^{\mathsf T},& (H_Z)_{j,q+t}&=F_{j-t}^{\mathsf T},
 \label{eq:kasai-apm-z-blocks}
\end{align}
for \(i,j\in S\) and \(t\in\Z_q\).  If
\begin{equation}
 F_uG_{r-u}=G_{r-u}F_u
 \quad(r\in\Delta_S,\ u\in\Z_q),
 \label{eq:kasai-active-commutativity}
\end{equation}
then every selected block-row pair has the APM-PP certificate
\begin{equation}
 M_{ij}=\bigl\{\{\,i+u,\ q+j-u\,\}:u\in\Z_q\bigr\}.
 \label{eq:kasai-apm-matching}
\end{equation}
Indeed, the two relative maps in a displayed pair are
represented by \(F_uG_{j-i-u}\) and \(G_{j-i-u}F_u\), respectively, and hence
are equal by Eq.~\eqref{eq:kasai-active-commutativity}.

The \((i,j)\) block of the parent product depends only on \(r=j-i\) and is
\begin{equation}
 (\widehat H_X\widehat H_Z^{\mathsf T})_{i,j}=\Psi_{j-i},\qquad
 \Psi_r:=\sum_{u\in\Z_q}
 \bigl(F_uG_{r-u}+G_{r-u}F_u\bigr).
 \label{eq:kasai-delta-psi}
\end{equation}
For \(r\in\Delta_S\), Eq.~\eqref{eq:kasai-active-commutativity} makes
\(\Psi_r=0\), which gives CSS orthogonality of the selected check matrices.
The full Kasai parent construction retains the complementary rows and requires
\(\Psi_r\ne0\) for some \(r\notin\Delta_S\), so that the entire parent pair is
not orthogonal.  Thus the specialization identifies the APM-PP check matrices
with the selected check matrices of the Kasai construction, while
Eq.~\eqref{eq:kasai-delta-psi} supplies the additional parent-level condition.

For a concrete \((J,L)=(3,12)\) example, take \(q=6\) and
\(S=\{0,1,2\}\).  Suppressing the other parent block rows, the zeroth rows of
Eq.~\eqref{eq:kasai-parent-matrices} are
\begin{align*}
 (\widehat H_X)_{0,*}
 &=\bigl(F_0,F_1,\ldots,F_5\mid G_0,G_1,\ldots,G_5\bigr),\\
 (\widehat H_Z)_{0,*}
 &=\bigl(G_0^{\mathsf T},G_5^{\mathsf T},\ldots,G_1^{\mathsf T}\mid
          F_0^{\mathsf T},F_5^{\mathsf T},\ldots,F_1^{\mathsf T}\bigr).
\end{align*}
The code matrices \(H_X,H_Z\) use parent block rows \(0,1,2\).  Then
\(\Delta_S=\{0,1,2,4,5\}\), and Eq.~\eqref{eq:kasai-apm-matching} gives, for
example,
\[
 M_{00}=\{\{0,6\},\{1,11\},\{2,10\},\{3,9\},\{4,8\},\{5,7\}\}.
\]
Commutativity is imposed for \(r\in\Delta_S\), while the additional
complementary-parent-row requirement can be placed at the missing residue as
\(\Psi_3\ne0\).  Here \(M_{00}\) specifies the block-row-dependent matching,
while \(\Psi_3\ne0\) supplies the complementary-parent constraint.

The complete distance lower-bound test also applies to a fixed APM-PP lift
because Theorem~\ref{thm:distance-search-correctness} is stated for binary
CSS matrices.  For a general APM lift, a complete search uses all \(LP\)
qubit positions as roots; a verified automorphism permits reduction to orbit
representatives.  The all-one vector still lies in each check row space, so all
vectors in the opposite kernel have even weight.  Thus exclusion through
weight \(d_0-2\) proves \(d\ge d_0\) for even \(d_0\).  Applying cyclic root
reduction or a polynomial permanent bound to an APM lift requires separate
verification of the corresponding symmetry or algebraic hypotheses.

The fixed \(\qcode{576}{294}{12}\) APM instance in the project catalogue
\cite{okadaKasaiData2026} gives a complementary example beyond termwise
commutativity: all 16 block-row pairs split into eight equal-relative-map pairs,
and complete exclusion through weight 10, together with even weight, proves
\(d\ge12\).

The standard parent construction with consecutive selected rows has the
restriction \(L\ge4J\), and hence full-rank design rate
\(R_{\rm des}=1-2J/L\ge1/2\).  Removing this restriction permits searches in
the range \(2J<L<4J\), including design rates below one half and smaller
values of \(L\) or \(P\).  Candidate codes in this range can be evaluated for
GALA- or AOD-oriented designs \cite{yangDuckeringDua2026GALA} and analogous
neutral-atom constraints \cite{zhao2026UltraHighRate} through their
symmetries, rank, distance, short cycles, and implementation schedules.

\section{Conclusion}

We developed a pair-partition construction in which paired relative maps give
CSS orthogonality, pairing graphs give label-independent girth restrictions,
and completed lifts are checked directly.  CPM labels reduce these
constraints to linear paired-difference equations.  The later APM extension
shows that the cancellation principle is more general than cyclic shifts,
and its structured specialization recovers the selected check matrices
called active in the Kasai construction.  The resulting list contains
thirty-six exact-distance CPM codes.  These include the high-rate
\(\qcode{1414}{812}{12}\) code and the girth-eight
\(\qcode{472}{122}{16}\), \(\qcode{584}{150}{18}\),
\(\qcode{1112}{282}{20}\), \(\qcode{1336}{338}{22}\),
\(\qcode{1630}{656}{20}\), \(\qcode{1784}{450}{24}\), and
\(\qcode{2230}{896}{24}\) codes.
The permanent construction gives the complementary universal bound
\(d\le(J+1)!\) under the stated rank and size conditions.

The pair-partition array determines CSS orthogonality and
exponent-independent cycle obstructions.  Binary rank and Tanner girth are
computed on the completed lift; quantum-distance lower bounds are certified
by the exhaustive search of Theorem~\ref{thm:distance-search-correctness} and
upper bounds by explicit non-stabilizer zero-syndrome vectors.  Exact distance
is reported when these bounds agree.

The present results concern reproducible finite codes.  The universal
girth-12 upper bound applies to the complete-protograph CPM family;
asymptotic distance growth for fixed \(J,L\) remains an open direction.

\vspace{-0.5\baselineskip}
\section*{Reproducibility Materials}

The project webpage \cite{okadaKasaiData2026} provides the code parameters,
fixed exponent or affine-label arrays, CPM and APM matrices, displayed girth
cycles, contributed code files, distance-search records, explicit logical
vectors, and the files used for the reported claims.  The GitHub
repository \cite{okadaKasaiCode2026} provides the software for construction,
exponent search, girth verification, and distance verification.

\vspace{-0.5\baselineskip}
\section*{Acknowledgment}

The motivation for the code construction developed in this work arose from
discussions with Hengyun Zhou, whom we thank.  We also thank Nishad Maskara for providing the
three check-weight-eight codes
\(\qcode{232}{62}{12}\), \(\qcode{248}{66}{12}\), and
\(\qcode{296}{78}{12}\), and the weight-20 logical vectors that establish
the matching upper bound for \(\qcode{944}{478}{20}\).  We also thank
Jong Yeon Lee for providing 19 additional CPM-PP instances.  Selected
instances are included in Table~\ref{tab:finite-summary} and discussed in
Section~\ref{sec:numerical-results}; data for the full set are available on
the project webpage~\cite{okadaKasaiData2026}.

\vspace{-0.5\baselineskip}
\enlargethispage{2\baselineskip}
\bibliographystyle{IEEEtran}
\bibliography{references}

\end{document}